\tikzstyle{nodeA} = [draw=blue!65!black, circle, fill=blue!65!black, minimum size=1ex, inner sep=1pt, text centered, align=center]
\tikzstyle{nodeB} = [draw=green!60!black, circle, fill=green!60!black, minimum size=1ex, inner sep=1pt, text centered, align=center]
\newtheorem{theorem}{Theorem}
\newtheorem{lemma}[theorem]{Lemma}
\newtheorem{obs}[theorem]{Observation}
\newtheorem{definition}{Definition}
\crefname{table}{Table}{Tables}
\crefname{figure}{Figure}{Figures}
\crefname{definition}{Definition}{Definitions}
\crefname{theorem}{Theorem}{Theorems}
\crefname{lemma}{Lemma}{Lemmas}
\crefname{claim}{Claim}{Claims}
\crefname{obs}{Observation}{Observations}
\crefname{prop}{Proposition}{Propositions}
\crefname{corollary}{Corollary}{Corollaries}
\crefname{example}{Example}{Examples}
\crefname{section}{Section}{Sections}
\crefname{subsection}{Subsection}{Subsections}
\crefname{algorithm}{Algorithm}{Algorithm}
\newcommand{\NP}{\textsf{NP}\xspace}
\newcommand{\coNP}{\textsf{coNP}\xspace}
\newcommand{\poly}{\textsf{P}}
\newcommand{\local}{\mathit{local}}
\begin{document}

\title{Complexity of Perfect and Ideal Resilience Verification in Fast Re-Route Networks}

\author[1,2]{Matthias Bentert}
\author[1,3]{Esra Ceylan-Kettler}
\author[3]{Valentin H\"{u}bner}
\author[1,4]{\\Stefan Schmid}
\author[5]{Ji\v{r}\'{\i} Srba}

\affil[1]{TU Berlin, Germany}
\affil[2]{University of Bergen, Norway}
\affil[3]{Institute of Science and Technology Austria, Austria}
\affil[4]{Fraunhofer SIT, Germany}
\affil[5]{Aalborg University, Denmark}

\date{}

\maketitle 

\begin{abstract}
To achieve fast recovery from link failures, most modern communication networks feature fully decentralized fast re-routing mechanisms. These re-routing mechanisms rely on pre-installed static re-routing rules at the nodes (the routers), which depend only on local failure information, namely on the failed links \emph{incident} to the node.
Ideally, a network is \emph{perfectly resilient}: the re-routing rules ensure that packets are always successfully routed to their destinations as long as the source and the destination are still physically connected in the underlying network after the failures.
Unfortunately, there are examples where achieving perfect resilience is not possible. Surprisingly, only very little is known about the algorithmic aspect of when and how perfect resilience can be achieved. 

We investigate the computational complexity of analyzing such local fast re-routing mechanisms. Our main result is a negative one: we show that even checking whether a \emph{given} set of static re-routing rules ensures perfect resilience is \coNP-complete. 
We also show the \coNP-completeness of the so-called \emph{ideal resilience}, a weaker notion of resilience often considered in the literature.  
Additionally, we investigate other fundamental variations of the problem. In particular, we show that our \coNP-completeness proof also applies to scenarios where the re-routing rules have specific patterns (known as \emph{skipping} in the literature). 

On the positive side, for scenarios where nodes do not have information about the link from which a packet arrived (the so-called in-port), we present linear-time algorithms for both the verification and synthesis problem for perfect resilience.
\end{abstract}

\section{Introduction}

Communication networks form a critical backbone of many distributed systems, whether they are running in enterprises, in data centers, or are geographically distributed across the Internet.
In order to meet their stringent availability requirements, modern networks feature local fast re-routing mechanisms:
each router (henceforth called node) has conditional packet forwarding rules which depend on the status of its links. These rules are configured \emph{ahead of time}, without knowledge of possible link failures, and allow routers to forward packets along alternative links in case of failures, in a fully decentralized manner.

However,  configuring such conditional forwarding rules to provide a high resilience, even under multiple link failures, is algorithmically challenging, as the rules can only depend on \emph{local information}: a node is not aware of possible additional link failures downstream, in other parts of the network. Hence, if no care is taken, the local re-routing rules of different nodes can easily result in forwarding loops.

Ideally, a network and its local fast re-routing mechanism provide \emph{perfect resilience}: the local re-routing rules ensure that as long as the source is still connected to the destination \emph{in the underlying network} after the failures, then the packet is also successfully routed to the destination \emph{on the network layer}. Already at ACM PODC 2012, Feigenbaum et al.~\cite{feigenbaum2012brief} gave an example which shows that certain networks inherently cannot be configured to provide perfect resilience. 

On the positive side, it is known that under a single link failure, it is always possible to successfully route packets to their destinations, as long as the underlying network is connected~\cite{feigenbaum2012brief}.  
But also for multiple link failures, it is sometimes possible to achieve perfect resilience, for example on outerplanar graphs~\cite{FHPST21}.

It is also known that achievable resilience depends on the specific local information that the forwarding rules can rely on. Although it is typically assumed that the forwarding rules can depend on the status of the incident links, the packet's destination, and the link incident to the node from which the packet arrives (the so-called \emph{in-port}), the achievable resilience can increase if the rules can additionally also observe the packet's source. In particular, Dai et al.~\cite{dai2023tight} showed that it is always possible to tolerate two link failures if the packet's source can be taken into account. 

However, today, we generally still do not have a good understanding of the scenarios in which perfect resilience can be achieved.

This paper initiates the study of the computational complexity of configuring local fast re-routing algorithms. 
In particular, we investigate the question whether the resilience of a network can be verified efficiently:

\begin{itemize}
    \item \emph{Does a given network and its local fast re-route mechanism provide perfect resilience?} 
\end{itemize}

Furthermore, we are interested in scenarios where networks can be efficiently configured.

\begin{itemize}
    \item \emph{In which scenarios can perfectly resilience networks be configured efficiently?} 
\end{itemize}

Our main contribution in this paper is the proof that the first problem is \coNP-complete. On the positive side, we also show that in a subclass of practically relevant scenarios, the second problem can be solved efficiently.

In addition to perfect resilience, we also study a weaker notion of resilience often considered in the literature: \emph{ideal resilience} \cite{frr-ton}. Here it is assumed that initially (before the failures), the network is $k$-link-connected, that is, the network cannot be partitioned into isolated components with up to $k-1$ link failures. In such highly connected cases, the routing is called \emph{ideal} if it can tolerate up to $k-1$ link failures (which implies that the network remains connected after the failures). 
In this paper, we also show that given a network and its packet forwarding rules, it is \coNP-complete to decide whether the forwarding rules achieve ideal resilience.

\subsection{Contributions}
We show that in the standard scenario where nodes know the in-port (the link on which a packet arrived), verifying whether a given network and its configuration provide ideal resilience is \coNP-complete. This is even the case when the routing functions are restricted to the simple case of priority lists (``skipping''). For perfect resilience, we show that the same is true, and the hardness result even holds when we restrict ourselves to the class of planar graphs.

On the positive side, for scenarios where nodes forward packets independently of the link they received it on (i.e., the routing is ``in-port oblivious''), verifying whether a given routing is perfectly resilient, as well as deciding whether a given graph permits perfectly resilient routing, are both decidable in linear time.
The in-port oblivious routings are interesting as they allow us to save memory for storing the routing tables. 

\subsection{Additional Related Work}

Local fast re-routing mechanisms have been studied intensively in the literature already,
and we refer the reader to the recent survey by Chiesa et al.~\cite{frr-survey} for a detailed overview. There also exist industrial standards for resilient routing for most modern network protocols, from IP networks~\cite{ipfrr-survey-1-long} to MPLS networks~\cite{mplsfrr} to recent segment routing~\cite{bashandy2018topology} and software-defined networks~\cite{offrr}. However, these standards typically focus on single link failures and do not provide perfect or ideal resiliency guarantees. 

There is a large body of applied literature in the networking community on the topic~\cite{conext19failover,foerster2018ti,INFOCOM21-FFR,holterbach2017swift,jensen2020aalwines,frr-ankit}, typically focusing on heuristics.
In contrast, in our paper, we focus on algorithms which provide formal resilience guarantees. 

There are several interesting theoretical results for failure scenarios in which the number $f$ of link failures is bounded. 
In this context, a local fast rerouting scheme which tolerates~$f$~link failures is called \emph{$f$-resilient}. Chiesa et al.~\cite{frr-infocom16} showed that $2$-resilient routing is always possible if the graph is $3$-link-connected, and Dai et al.~\cite{dai2023tight} showed that $2$-resilience is always possible if rerouting rules can also depend on the source of a packet in addition to the target.
In their APOCS 2021 paper, Foerster et al.~\cite{FHPST21} proved that network topologies that form outerplanar graphs are always perfectly resilient and allow for simple and efficient rerouting algorithms based on skipping (each node stores an ordered priority list of alternative links to try per in-port and failed links are then simply skipped). While skipping leads to compact routing tables, it is an open question whether rerouting algorithms which are restricted to skipping come at a price of reduced resilience~\cite{frr-ton,FHPST21}.

Perfectly resilient network configurations, when they exist, can also be generated automatically with recent tools such as SyPer~\cite{gyorgyi2024syper} and SyRep~\cite{frr-syrep24}, using binary decision diagrams with quantification. This also implies that the complexity of the problem of generating such rules is in~\textsf{PSPACE}.

It is still an open research question whether ideal resilience can always be achieved, but it has been shown to be always achievable for all~$k\leq 5$~\cite{frr-ton}. It is also known that at least $\lfloor\frac{k}{2}\rfloor-1$ failures can be tolerated for general $k$~\cite{frr-infocom16}. It has further been proved that $k-1$ failures can be tolerated for special graphs including cliques, complete bipartite graphs, hypercubes, or Clos networks, as well as in scenarios where the rerouting rules can additionally also depend on the source~\cite{frr-ton}.
The corresponding algorithms are based on arborescence decompositions of the underlying graph, where all arborescences are rooted at the target.
If a packet traveling along an arborescence encounters a failed link, it is rerouted to the next arborescence in a circular order (known as circular arborescence routing). This technique builds upon Edmond's classic result on edge-disjoint branchings and arborescent decompositions~\cite{1973Edmonds}, and was used in many papers~\cite{frr-survey}.

Researchers have also already explored various variations for local fast rerouting models. In particular, while we in this paper focus on deterministic algorithms, there are also results on randomized algorithms: models where routers can generate random numbers or hash packet headers. Chiesa et al.~\cite{icalp16} in their ICALP 2016 paper considered $k$-link-connected graphs and showed that a simple randomized algorithm not only achieves a high robustness but also results in short paths (in the number of hops). Bankhamer et al.~\cite{disc21} in their DISC 2021 paper considered datacenter networks (based on Clos topologies) and showed that as long as the number of failures at each node does not exceed a certain bound, their algorithm
achieves an asymptotically minimal congestion up to polyloglog factors along failover paths. 

\subsection{Organization}

The remainder of this paper is organized as follows. 
We introduce our model and notations more formally in Section~\ref{sec:definitions}.
Our main contribution, the complexity results, appear in 
Section~\ref{sec:complexity}.
We conclude and provide open questions in 
Section~\ref{sec:conclusion}.

\section{Definitions}\label{sec:definitions}

We define a network as an undirected graph~$G=(V,E)$, where $V$ is a finite set of nodes and~${E \subseteq \binom{V}{2}}$ is a set of links.
Let $\local(v)$ be the set of links that are incident to a node~${v \in V}$.
To simulate the injection of a packet into the network, we moreover require the existence of the
self-loop~$\{v\} \in E$ for each node $v \in V$.
Since these self-loops always exist, we will not draw them in figures.

A \emph{path} in $G$ is a sequence~$(v_1,v_2, \ldots, v_n)$ of nodes such that~$\{v_i,v_{i+1}\} \in E$ for all $1 \leq i < n$.
Two nodes $u,v \in V$ are \emph{connected} if there is a path
$(u=v_1,v_2, \ldots, v_n=v)$ between~$u$ and~$v$ in~$G$.
In the rest of this paper, we shall consider only connected graphs, meaning
that there is a path between any two nodes.

\begin{definition}
A \emph{failure scenario} $F \subseteq E$ is a subset of links not including any self loops, that is, a set with~${F \cap \{\{v\} \mid v \in V\} = \emptyset}$.
\end{definition}

We assume that the self-loops are only used to model the arrival of packets and therefore do not allow them to fail.
Links in $F$ are referred to as \emph{failed} links and in~$E \setminus F$ as \emph{active} links.
Due to our assumption, every node has at least one active edge (self-loop) in any failure scenario.

\begin{definition}
A \emph{forwarding pattern} $\rho$ is a collection of functions~$\rho^F_v$ for each failure scenario~$F$ and each node~$v$, where each function~$\rho^F_v: \local(v) \rightarrow \local(v)$ returns an active link~$e' \in \local(v) \setminus F$ for every link~$e\in \local(v)$.
\end{definition}

From now on, we assume a fixed target node $t \in V$ such that all packets that arrive at any node in the graph should be forwarded to the node $t$. This corresponds to the (weakest possible) assumption of only destination matching when analyzing packet headers.

\begin{definition}
A forwarding pattern~$\rho$, a starting node~$s$, and a set~$F \subseteq E$ of failed links determine a \emph{routing}~$\pi=(\pi_1,\pi_2,\ldots)$ (a sequence of nodes) as follows: $\pi_1 = s$ and for each~$i \geq 1$, $\rho_{\pi_{i}}^F(\{\pi_{i-1},\pi_{i}\}) = \{\pi_{i},\pi_{i+1}\}$ (where~$\pi_0 = s$). The sequence stops at the lowest index $j$ where~$\pi_j = t$ (when the packet reaches the destination node $t$).
\end{definition}

We refer to \cref{fig:intro-ex-routing} for an example of a forwarding pattern and a routing.

\begin{definition}
Let $G=(V,E)$ be an undirected graph and let~$t \in V$ be a given target node.
A forwarding pattern~$\rho$ is \emph{perfectly resilient} if for every failure scenario~$F$ and every node~$v \in V$ that is connected to~$t$ via some path in~$G_F = (V,E\setminus F)$, the routing determined by~$\rho$,~$v$, and~$F$ is finite (contains~$t$).
\end{definition}
In other words, a routing $\rho$ is perfectly resilient if for any failure scenario $F$ and any node~$v$ that is connected to $t$ under $F$, a packet injected via the self-loop to the node~$v$ is eventually delivered to the target $t$.

A special case of the perfect resilience property that attracted a lot of attention in the literature is that of ideal resilience~\cite{frr-infocom16,Chiesa2014}.

\begin{definition}
Let $G=(V,E)$ be a $k$-link-connected graph (removing any set of~$k-1$ links does not disconnect~$G$).
A forwarding pattern~$\rho$ is \emph{ideally resilient} if for any failure scenario~$F$ of size at most~$k-1$ 
and every node $v \in V$, the routing determined by~$\rho$, $v$, and~$F$ is finite (contains $t$).
\end{definition}

Clearly, if the forwarding pattern~$\rho$ has complete knowledge of the global failure scenario, it is always possible to construct a perfectly/ideally 
resilient forwarding pattern. As we are interested in forwarding pattern that allow for fast re-routing based only on the information about locally failing links, we define the notion of a local forwarding pattern.

\begin{definition}
A forwarding pattern~$\rho$ is \emph{local} if $\rho_v^F(e) = \rho_v^{F'}(e)$
for every~$v \in V$, every~$e \in E$, and any two failure scenarios~$F$ and~$F'$ such that~$F \cap \local(v) = F' \cap \local(v)$. 
\end{definition}

Hence, the decision of the next-hop can only depend on the knowledge of active and failing links incident to a given node.
We shall also call such a general local forwarding pattern \emph{combinatorial} because the size of a local routing table for a node $v$ requires $2^{|\local(v)|}$ routing entries and hence grows exponentially with the degree of $v$ (the size of $\local(v)$). Storing such forwarding patterns is impractical due to high memory requirements.
In order to save memory in routing tables, several works consider the more practical
notion of skipping forwarding patterns~\cite{frr-infocom16,frr-ton,FHPST21}.

\begin{definition}
A \emph{skipping priority list} is a function~${\rho_v: \local(v) \rightarrow \local(v)^*}$ such that for every link~$e \in \local(v)$ the function $\rho_v(e)=(e_1,e_2, \ldots, e_n)$ returns a permutation of all links in $\local(v)$. A given skipping priority list $\rho$ determines a \emph{skipping forwarding pattern} $\rho$ such that
$\rho_v^F(e)=e_i$ where $\pi(e)=(e_1,e_2,\ldots, e_n)$ and $i$ is the lowest
index such that $e_i \notin F$.
\end{definition}

\begin{obs}
	Clearly, any skipping forwarding pattern is also local (combinatorial) because the forwarding decision is based only on the status of incident links. Moreover, representing a skipping priority list~$\rho_v(e)$ requires us to store only $|\local(v)|$ links.
\end{obs}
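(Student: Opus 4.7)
The plan is to dispatch both claims directly from the definitions, with no additional machinery needed; this is truly an observation rather than a theorem.

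For locality, I would fix a node $v \in V$, a link $e \in \local(v)$, and two failure scenarios $F, F'$ with $F \cap \local(v) = F' \cap \local(v)$. Writing $\rho_v(e) = (e_1, e_2, \ldots, e_n)$, every $e_j$ lies in $\local(v)$ by definition of a skipping priority list (it is a permutation of $\local(v)$), so for each $j$ we have $e_j \in F$ iff $e_j \in F \cap \local(v)$ iff $e_j \in F' \cap \local(v)$ iff $e_j \in F'$. Therefore the smallest index $i$ with $e_i \notin F$ coincides with the smallest index with $e_i \notin F'$, which yields $\rho_v^F(e) = \rho_v^{F'}(e)$ and matches the definition of a local forwarding pattern. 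The only point worth a brief sanity check is that such a minimal index always exists: since self-loops are not allowed to fail and $\{v\} \in \local(v)$ appears somewhere in the permutation, at least one $e_j$ is active, so $\rho_v^F(e)$ is well defined.

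For the storage claim, the definition of skipping priority list states that $\rho_v(e)$ is a permutation of $\local(v)$, and a permutation of a set of size $|\local(v)|$ can be stored as an ordered tuple of exactly $|\local(v)|$ links. In contrast, a general combinatorial forwarding pattern requires one entry per subset $F \cap \local(v)$, which gives the $2^{|\local(v)|}$ figure mentioned earlier in the text, and this contrast is what the observation is highlighting.

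The main (and only) obstacle is stylistic rather than mathematical: keeping the presentation concise enough that it matches the register of an observation while still spelling out why the index $i$ in the definition of $\rho_v^F(e)$ is guaranteed to exist. Once that is noted, both claims follow by unpacking the definitions.
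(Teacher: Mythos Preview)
Your proposal is correct and matches the paper's approach: the paper treats this as a self-evident observation with no separate proof, and your argument is simply a careful unpacking of the definitions (locality follows because the priority list lives entirely in $\local(v)$, and the storage bound follows because a permutation of $\local(v)$ has $|\local(v)|$ entries). The only addition you make beyond the paper is the sanity check that the minimal index exists thanks to the always-active self-loop, which is a reasonable clarification.
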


We refer to \cref{fig:intro-ex-routing} for an example of a graph with a perfectly resilient skipping forwarding pattern.
We depict the priority list function as a routing table
where the first column is the node that receives a packet on the given in-port (in the second column) and returns
the respective priority list in the third column. For example, the first row in the routing table denotes the routing
entry ${\rho_{v_1}(\{v_1,v_1\})=(\{v_1,v_2\},\{v_1,v_3\},\{v_1,v_4\},\{v_1,v_1\})}$. The example shows the routing for a packet injected at the node $v_1$ in the scenario where no links fail and when the links~$\{v_2,v_5\}$ and~$\{v_3,v_5\}$ fail. As the forwarding pattern is perfectly resilient, in both scenarios the packet is delivered to the target node $t=v_5$.

\begin{figure}[t]
    \subcaptionbox{Example graph with the target node~$t=v_5$.}[0.21\textwidth]{
	    \centering
    	\begin{tikzpicture}[baseline,remember picture,scale=1,every node/.style={scale=0.9}, >=stealth', shorten <= 1pt, shorten >= 1pt]
			\foreach \x / \y / \n in
			{0/0/p1, -1/-1/p2, 0/-1/p3, 1/-1/p4, 0/-2/p5}
			{
				\node[nodeA] at (1.2*\x, 2*\y) (\n) {};
			}
			\foreach \n in {1,2,3,4,5} {
				\node[right = 0 pt of p\n] {$v_\n$}; 
			}
			
			\foreach \v / \w in {1/2, 1/3, 1/4, 2/5, 3/5, 4/5} {
				\draw[thick] (p\v) edge (p\w);
			}
            \draw[red,thick] (p2) edge (p5);
            \draw[red,thick] (p3) edge (p5);
		\end{tikzpicture}
        }
		\label{subfig:intro-ex-graph}
    \hspace{.3cm}
    \subcaptionbox{Perfectly resilient skipping forwarding pattern; a routing with no failing edges when a packet is injected to $v_1$ is $(v_1,v_2,v_5)$.}[0.35\textwidth]{
        \centering
    	\begin{tikzpicture}[baseline,remember picture,scale=1,every node/.style={scale=0.9}, >=stealth', shorten <= 1pt, shorten >= 1pt]
			\node (tab1) at (0, 0) {%
				\renewcommand{\arraystretch}{1.0}
				\begin{tabular}{c|c|r}
					node & in-port & priority list \\
					\hline 
					$v_1$ & \colorbox{yellow}{$v_1$} & \colorbox{yellow}{$v_2$}, $v_3$, $v_4$, $v_1$ \\
                    $v_1$ & $v_2$ &$v_3$, $v_4$, $v_2$, $v_1$ \\
                    $v_1$ & $v_3$ &$v_4$, $v_2$, $v_3$, $v_1$ \\
                    $v_1$ & $v_4$ &$v_2$, $v_3$,  $v_4$, $v_1$ \\
                    \hline
					$v_2$ & \colorbox{yellow}{$v_1$} & \colorbox{yellow}{$v_5$}, $v_1$, $v_2$ \\
                    $v_2$ & $v_2$ &$v_5$, $v_1$, $v_2$ \\
                    \hline
					$v_3$ & $v_1$ & $v_5$, $v_1$, $v_3$ \\
                    $v_3$ & $v_3$ &$v_5$, $v_1$, $v_3$ \\
                    \hline 
					$v_4$ & $v_1$  &$v_5$, $v_1$, $v_4$ \\
					$v_4$ & $v_4$ &$v_5$, $v_1$, $v_4$
			\end{tabular}};
		\end{tikzpicture}
		\label{subfig:intro-ex-table}
    }
    \hspace{.3cm}
    \subcaptionbox{A routing for the failure scenario $F=\{\{v_2,v_5\},\{v_3,v_5\}\}$ is~$(v_1,v_2,v_1,v_3,v_1,v_4,v_5)$.}[0.35\textwidth]{
        \centering
		\begin{tikzpicture}
		[baseline,remember picture,scale=1,every node/.style={scale=0.9}, >=stealth', shorten <= 1pt, shorten >= 1pt]
			\node (tab1) at (0, 0) {%
				\begin{tabular}{c|c|r}
					node & in-port & priority list \\
					\hline 
					{$v_1$} & \colorbox{yellow}{$v_1$} & \colorbox{yellow}{$v_2$}, $v_3$, $v_4$, $v_1$ \\
                    $v_1$ & \colorbox{yellow}{$v_2$} &\colorbox{yellow}{$v_3$}, $v_4$, $v_2$, $v_1$ \\
                    $v_1$ & \colorbox{yellow}{$v_3$} & \colorbox{yellow}{$v_4$}, $v_2$, $v_3$, $v_1$ \\
                    $v_1$ & $v_4$ &$v_2$, $v_3$,  $v_4$, $v_1$ \\
                    \hline
					$v_2$ & \colorbox{yellow}{$v_1$} & \cancel{$v_5$}, \colorbox{yellow}{$v_1$}, $v_2$ \\
                    $v_2$ & $v_2$ &\cancel{$v_5$}, $v_1$, $v_2$ \\
                    \hline
					$v_3$ & \colorbox{yellow}{$v_1$} & \cancel{$v_5$}, \colorbox{yellow}{$v_1$}, $v_3$ \\
                    $v_3$ & $v_3$ &\cancel{$v_5$}, $v_1$, $v_3$ \\
                    \hline 
					$v_4$ & \colorbox{yellow}{$v_1$}  &\colorbox{yellow}{$v_5$}, $v_1$, $v_4$ \\
					$v_4$ & $v_4$ &$v_5$, $v_1$, $v_4$
			\end{tabular}};
		\end{tikzpicture}
        \label{subfig:intro-ex-routing}
    }
	\caption{Example of a graph with target $t=v_5$ and a perfectly resilient skipping forwarding pattern. For the sake of readability, we represent (here and in the following figures) links incident to a node~$v$ by only stating the other endpoint of the link. The routing determined by this forwarding pattern, failure scenario~$F=\{\{v_2,v_5\},\{v_3,v_5\}\}$, and the starting node~$s=v_1$ is~${\pi=(v_1,v_2,v_1,v_3,v_1,v_4,v_5)}$.}
	\label{fig:intro-ex-routing}
\end{figure}

A more restricted type of routing is the one that does not consider the in-port information for the 
routing decision. This is formalized in the following definition.

\begin{definition}
A forwarding pattern~$\rho$ is \emph{in-port oblivious}
if $\rho_v^F(e)=\rho_v^F(e')$ for any node~$v \in V$, any failure scenario $F$, and any two links~$e, e' \in \local(v)$. 
\end{definition}

Hence, when using in-port oblivious forwarding patterns, a~node~$v$ makes always the same next-hop, regardless of at which link (port) the packet arrives at $v$.
In-port oblivious forwarding patterns are more memory efficient as they requires fewer entries in the routing tables.

Finally, we shall introduce two decision problems related to perfect and ideal 
resilience.
In the \emph{verification problem}, we are given an undirected graph $G=(V,E)$, a target node~${t \in V}$, and a local/skipping forwarding pattern~$\rho$.
The task is to decide whether~$\rho$ is perfectly/ideally resilient.
In the \emph{synthesis problem}, we ask whether there exists a local (or skipping) forwarding pattern that is perfectly/ideally resilient for a given graph~$G$ and a target node~$t$.

\section{Complexity Results}\label{sec:complexity}
We shall start by stating an obvious upper-bound on the complexity of perfect resilience verification
and at the same time point out to an interesting fact
that the synthesis problem is decidable in polynomial time
thanks to the well-know Robertson and Seymour~\cite{RS-theorem} theorem.

\begin{theorem} \label{thm:complexity}
Verification of \textsc{Perfect Resilience} and verification of \textsc{Ideal Resilience} are in~\coNP. Synthesis of \textsc{Perfect Resilience} is in~\poly.
\end{theorem}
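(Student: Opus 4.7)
The plan is to handle the two halves of the theorem separately: a short-certificate argument for the \coNP\ upper bounds, and a minor-closure argument combined with Robertson--Seymour for the \poly\ upper bound on synthesis.

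For verification, I would give a polynomial-time verifiable certificate of \emph{non-resilience}, namely a pair $(F,s)$ of a failure scenario and a starting node, together with the side condition that $s$ is connected to $t$ in $G\setminus F$ (for \PR) or that $|F|\le k-1$, where $k$ is the link-connectivity of $G$ (for Ideal Resilience). Both side conditions are polynomial-time checkable: the first by BFS in $G\setminus F$, the second by a max-flow computation that yields $k$. The central observation for checking non-reachability of $t$ is that the state of a routing after step $i\ge 1$ is fully captured by the directed edge $(\pi_{i-1}\to\pi_i)$, of which there are at most $2|E|$; since $\rho$ is deterministic, if $t$ has not been reached within $2|E|+1$ steps the routing must revisit a state and is therefore trapped in a cycle. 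So the certificate is validated in polynomial time.

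For synthesis, the plan is to consider the class
\[
\mathcal{C} := \{(G,t): G \text{ admits a local perfectly resilient forwarding pattern with target } t\},
\]
show that $\mathcal{C}$ is closed under taking rooted minors, and then apply the Robertson--Seymour Graph Minor Theorem in its rooted/labelled form (with $t$ as a distinguished label preserved by contractions) to obtain a finite obstruction set that can be tested for in cubic time per obstacle, yielding polynomial-time recognition of~$\mathcal{C}$. Closure under edge deletion is the easy step: for $e\in E$, the pattern $\rho'^{F'}_v(f):=\rho^{F'\cup\{e\}}_v(f)$ is perfectly resilient for $G-e$, because its routings under $F'$ replay those of $\rho$ under $F'\cup\{e\}$ and $s$-$t$ connectivity in $(G-e)\setminus F'$ coincides with that in $G\setminus(F'\cup\{e\})$.

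The main obstacle will be closure under edge contraction. For $e=\{u,w\}$ contracted to a vertex $v^*$, I would define the forwarding at $v^*$ by \emph{simulating} $\rho$ in $G$ with $e$ declared active: given an in-port $f'$ at $v^*$, start at the corresponding endpoint of $e$ with the matching incoming link, iteratively apply $\rho^F$, and let $\rho'^{F'}_{v^*}(f')$ be the first link that leaves $\{u,w\}$. The simulated walk has at most $|\local(u)|+|\local(w)|$ distinct (endpoint, in-link) states, so it either exits or enters a cycle; in the cyclic case, perfect resilience of $\rho$ forces the starting endpoint---and, since $e$ is kept active, also $v^*$---to be disconnected from $t$ in $G\setminus F$, which is equivalent to $v^*$ being disconnected from $t$ in $(G/e)\setminus F'$, so the PR requirement is vacuous and $\rho'^{F'}_{v^*}(f')$ may be set arbitrarily (say, to the self-loop). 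The remaining bookkeeping---verifying that $\rho'$ is \emph{local} because the simulation only consults edges in $\local(u)\cup\local(w)$, handling parallel edges produced by the contraction, and the degenerate case $t\in\{u,w\}$ where $v^*$ inherits the root label so no forwarding at $v^*$ is needed---is routine but must be checked.
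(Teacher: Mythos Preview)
Your \coNP argument matches the paper's: a failure scenario together with a source is a polynomial-size certificate of non-resilience, verified by connectivity/size checks and bounded simulation of the deterministic routing.

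For synthesis in \poly, the paper does \emph{not} prove minor-closure---it simply cites this from Foerster et al.\ and then appeals to Robertson--Seymour, invoking the rooted variant for a fixed target. Your plan to prove minor-closure directly is a legitimate, more self-contained route, and the edge-deletion step is correct. But your edge-contraction step has a gap. You claim that if the simulation at $v^*$ (entered at $u$ via in-port $f'$) loops inside $\{u,w\}$, then perfect resilience of $\rho$ forces $u$ to be disconnected from $t$ in $G\setminus F$. Perfect resilience, however, only constrains routings that start from a \emph{self-loop}; it says nothing about a walk that enters $u$ along an arbitrary edge~$f'$. One can build a perfectly resilient $\rho$ with $\rho_u^F(\{z,u\})=e$, $\rho_w^F(e)=e$, $\rho_u^F(e)=e$, so that the simulation from $(u,\{z,u\})$ bounces forever between $u$ and $w$, while $u$ is still connected to $t$ and the self-loop routings from $u$, $w$, and $z$ all reach $t$ via other outgoing edges.

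The repair is to argue globally rather than locally: show that a routing in $G/e$ from any source $s'$ connected to $t$ under $F'$ \emph{unfolds} (by expanding each visit to $v^*$ into the simulated walk) to the genuine self-loop routing in $G$ from the corresponding source $s$ under the lifted failure set $F$. If the simulation ever loops along this unfolding, the routing in $G$ is infinite, so by perfect resilience of $\rho$ the \emph{source} $s$---not the endpoint $u$---must be disconnected from $t$, contradicting the assumption on $s'$. Hence the looping case is never reached in any PR-relevant routing, and your arbitrary default value at $v^*$ is harmless. With this correction your minor-closure argument goes through.
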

\begin{proof}
    For the verification problems, we can guess a failure scenario and a source node that is connected to the target but where the forwarding pattern creates a forwarding loop, which can be checked in polynomial time for the given failure scenario. Hence, if there exists a failure scenario creating a forwarding loop, the given instance of the verification problem is not perfectly resilient, showing that the problem belongs to \coNP.

    The containment in \poly{} for the synthesis problem is a direct consequence of the fact that perfectly resilient networks are closed under taking minors~\cite{FHPST22} and the seminal result of Robertson and Seymour~\cite{RS-theorem} showing that any such property can be characterized by a finite set of forbidden minors.
    Combined with a~polynomial time algorithm for checking whether a~graph contains a~forbidden minor, we know that there must exist an algorithm with a polynomial running time that decides perfect resilience for the case where all nodes can become possible target nodes.
    Using rooted minors (the target~$t$ is a root), it is easy to show that deciding whether a~given graph with a fixed given target node is perfectly resilient can be also done in polynomial time~\cite{RS90,RS95}.
\end{proof}

However, the line of arguments used in the proof to show that the verification problem of perfect resilience is polynomial-time solvable does not provide a concrete algorithm as it does not tell us the set of forbidden minors; it only points to its existence.
The problem of effectively (in polynomial time) designing perfectly resilient 
forwarding patterns, in case they exist, remains a major open problem.

Next, we shall study the complexity of in-port oblivious perfect resilience, which is an interesting problem from the practical point of view as in-port oblivious forwarding patterns can be stored with reduced memory foot-print.

\subsection{Routings with no In-port Matching}
We shall first state a necessary and sufficient condition for a graph to allow for in-port oblivious forwarding patterns.

  \begin{lemma} \label{lem:inport}
    Let $G=(V,E)$ be a connected graph.
    There exists a perfectly resilient in-port oblivious local (or skipping) forwarding pattern for any given target node~$t$ if and only if all simple cycles of $G$ have length at most $3$.
  \end{lemma}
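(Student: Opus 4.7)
The plan rests on the block-cut-tree structure: ``all simple cycles have length at most~$3$'' is equivalent to ``every biconnected block of $G$ is $K_2$ or $K_3$'', since $K_3$ is the only $2$-connected graph in which every cycle is a triangle.

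\emph{Sufficiency ($\Leftarrow$).} Root the block-cut tree at $t$; for every non-target vertex $v$, let $B_v$ be its unique block on the $v$-to-$t$ path, $p_v$ the cut-vertex of $B_v$ separating $v$ from $t$ and, when $B_v$ is a triangle, $q_v$ the third vertex. I would assign $v$ the in-port oblivious skipping priority list $\{v,p_v\},\{v,q_v\},\{v\},\ldots$ (drop $\{v,q_v\}$ if $B_v$ is a bridge, and list leftover incident edges arbitrarily). Perfect resilience then follows by induction on the block-cut-tree depth of the packet's current location: whenever the packet is at a node $v$ still connected to $t$ in $G_F$, either $\{v,p_v\}$ is active (and the forwarding yields $v\to p_v$), or $\{v,p_v\}$ fails and then both $\{v,q_v\}$ and $\{q_v,p_v\}$ must also be active (otherwise $p_v$ would separate $v$ from $t$ in $G_F$, contradicting connectivity), giving $v\to q_v\to p_v$. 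In both cases the block-cut depth strictly drops within at most two hops, so the routing terminates at $t$.

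\emph{Necessity ($\Rightarrow$).} Let $C=v_1v_2\cdots v_k$ be a simple cycle in $G$ with $k\geq 4$, set $t:=v_1$, and assume for contradiction that some in-port oblivious forwarding pattern $\rho$ is perfectly resilient. With $F_0$ defined as failing all non-cycle edges, each $v_i$ ($i\neq 1$) sees only its self-loop and its two cycle edges as active, so by in-port obliviousness $\rho$'s choice at $v_i$ is pinned to a single edge, whose other endpoint we denote $d(v_i)\in\{v_{i-1},v_i,v_{i+1}\}$. Perfect resilience under $F_0$ (in which $v_i$ reaches $t$ along $C$) excludes $d(v_i)=v_i$ (the self-loop would trap the packet), so $d(v_i)\in\{v_{i-1},v_{i+1}\}$. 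Avoiding $2$-cycles in $F_0$ then forces $d$ to be monotonic along $C$: for some pivot $j\in\{2,\dots,k+1\}$, $d(v_i)=v_{i-1}$ for $i<j$ and $d(v_i)=v_{i+1}$ for $i\geq j$.

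Set $L=\{i:d(v_i)=v_{i-1}\}$ and $R=\{i:d(v_i)=v_{i+1}\}$. Since $|L|+|R|=k-1\geq 3$, pigeonhole yields $|L|\geq 2$ or $|R|\geq 2$. If $|L|\geq 2$ then $d(v_2)=v_1$ and $d(v_3)=v_2$; take $F_1:=F_0\cup\{\{v_1,v_2\}\}$. At $v_2$ the only non-self-loop active edge is $\{v_2,v_3\}$, so $\rho$ must forward there (the self-loop is again excluded since $v_2$ remains connected to $t$ in $G_{F_1}$ via $v_2v_3\cdots v_kv_1$); at $v_3$ the local scenario coincides with $F_0$, so $\rho$ still forwards to $v_2$. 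The resulting $2$-cycle $v_2\leftrightarrow v_3$ contradicts perfect resilience. The case $|R|\geq 2$ is handled symmetrically by failing $\{v_k,v_1\}$ to trap $v_{k-1}$ and $v_k$. The main subtlety is the preliminary step that rules out $d(v_i)$ being the self-loop; once that is settled, the rest reduces cleanly to monotonicity, one pigeonhole, and a single extra edge failure.
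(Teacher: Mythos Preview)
Your proof is correct. The sufficiency direction is essentially the paper's argument recast in block--cut-tree language: the paper observes that with only triangles present there is a unique shortest path to $t$, routes along it, and handles each triangle by first trying the direct edge to the apex and then the detour through the third vertex --- exactly your $p_v$/$q_v$ priority list.

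The necessity direction differs. The paper's argument is shorter and does not pick $t$: from a cycle of length at least~$4$ it extracts three nodes $s,u,v\in V\setminus\{t\}$ with $\{s,u\},\{s,v\}\in E$ and disjoint $u$--$t$ and $v$--$t$ paths avoiding $s$; it then fails all edges at $s$ except $\{s,u\},\{s,v\}$, lets $s$ forward to (say) $u$, and additionally fails all edges at $u$ except $\{s,u\}$, producing the $s\leftrightarrow u$ loop while $s$ is still connected to $t$ via $v$. Your route instead fixes $t=v_1$ on the cycle, kills all non-cycle edges, derives monotonicity of the induced orientation from $2$-cycle avoidance, and uses pigeonhole plus one extra edge failure to trap two consecutive vertices. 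Both are valid; the paper's version is more economical and establishes the slightly stronger fact that \emph{every} choice of $t$ fails when a long cycle is present, whereas yours exhibits one bad $t$, which is all the lemma (read as ``for every $t$ there exists a pattern'') actually requires. Your approach has the minor advantage of being fully explicit about the failure sets, while the paper leaves the existence of $s,u,v$ for arbitrary $t$ to the reader.
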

  
\begin{proof}
For the forward direction, assume towards a contradiction that there is an in-port oblivious perfectly resilient local forwarding pattern~$\rho$ for a given target node~$t$ and that the graph $G$ contains a simple cycle of length $4$ or more. The existence of this cycle implies
that there are three different nodes $s,u,v \in V \setminus \{t\}$ such that
        \begin{itemize}
        \item $\{s,u\} \in E$ and $\{s,v\} \in E$, 
        \item there is a path from $u$ to $t$ which does not contain the nodes~$v$ or~$s$, and
        \item there is a path from $v$ to $t$ which does not contain the nodes~$u$ or~$s$.
    \end{itemize}
    Now a packet injected to $s$ in a failure scenario where
    all links incident to~$s$ fail, except for $\{s,u\}$ and $\{s,v\}$, will have to be forwarded to either $u$ or $v$. Let us assume without loss of generality that it is forwarded to $u$. We extend the failure
    scenario by failing also all links incident to~$u$, except for the link between $u$ and $s$. This means that the packet
    returns back to~$s$ and because the forwarding pattern is local and cannot see the failed links at the node~$u$, and we are
    in-port oblivious,
    the node $s$ has to forward the packet back to~$u$. A forwarding loop
    is hence created; however, the node $s$ is still connected to the target $t$ via the node $v$. This is a contradiction to the assumption that~$\rho$ is a perfectly resilient forwarding pattern.

     \begin{figure}[t]
         \centering
     \begin{tikzpicture}[scale=0.90,every node/.style={scale=0.9}, >=stealth', shorten <= 1pt, shorten >= 1pt]
         \foreach \x / \y / \n in
         {1/1/u, 2/2/v, 1/0/u', 1/-1/t}
         {
             \node[nodeA] at (\x, \y) (\n) {};
             \node[right = 9pt and -3pt of \n] {\,$\n$}; 
         }
          \foreach \x / \y / \n in
         { 0/2/s}
         {
             \node[nodeA] at (\x, \y) (\n) {};
             \node[left = 9pt and -3pt of \n] {\,$\n$}; 
         }
         \draw (u) edge (u');
         \draw (t) edge[dotted] (u');
         \draw (s) edge (u);
         \draw (s) edge (v);
         \draw (v) edge (u);
     \end{tikzpicture}		
     \caption{A triangle consisting of three nodes~$s,v,$ and~$u$, where~$u$ is the unique node where all paths from either~$s$ or~$v$ to~$t$ pass through~$u$ and~$u'$ is the second node on the shortest path from~$u$ to~$t$.}
     \label{fig:triangle}
     \end{figure}
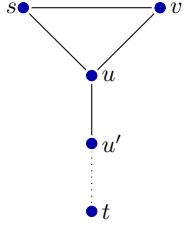

In the backward direction, assume that the graph~$G$ contains only cycles of length at most $3$. This implies that from any node in~$G$ there is a unique shortest path to the given target node~$t$. We construct an in-port oblivious perfectly resilient skipping (and hence also local) forwarding pattern~$\rho$ as follows. For every node~$s$ that is not on any simple cycle of length $3$ in $G$, we define~${\rho(\{*,s\})=\{s, u\}}$ where~$u$ is the first node on the unique shortest path from~$s$ to~$t$ and $*$ stands for any neighbor of~$s$. There is no point in sending the packet along any other outgoing link as it will have to necessarily return back to~$s$ and hence create a forwarding loop. Let us now consider three nodes $s, u, v \in V$ that are on a cycle of length~$3$ as depicted in Figure~\ref{fig:triangle}. 
Let~$u$ be the unique node that has a path to the target~$t$ without visiting~$s$ or~$v$. We define $\rho(\{*,u\})=\{u,u'\}$ where~$u'$ is the unique next-hop node on the shortest path from $u$ to $t$. As before, this is the only choice should $\rho$ be perfectly resilient and the links following after $\{u,u'\}$ in the priority list are irrelevant.
Finally, we set~$\rho(\{*,s\})=\{s,u\}\{s,v\}$ and~$\rho(\{*,v\})=\{v,u\}\{v,s\}$.
Note that whenever there is a path from~$s$ to~$t$, then each such path contains~$u$.
Hence, if the link~$\{s,u\}$ does not fail, then forwarding to~$u$ always makes progress towards reaching~$t$.
If the link fails, then all paths from~$s$ to~$t$ contain both~$v$ and~$u$ and hence forwarding to~$v$ also makes progress.
The same argument applies to paths starting in~$v$ and thus the constructed forwarding pattern is perfectly resilient, concluding the proof.
\end{proof}

\begin{figure}[t]
    \begin{center}
    \begin{subfigure}[c]{0.15\textwidth}
    \begin{tikzpicture}[scale=0.95,every node/.style={scale=0.9}, >=stealth', shorten <= 1pt, shorten >= 1pt]
				\foreach \x / \y / \n in
				{0/0/v_1, 1/0/v_6, 0/1/v_2, 1/2/v_3, 1/1/v_5, 2/1/v_4, 2/0/t}
				{
					\node[nodeA] at (\x, \y) (q\x\y) {};
                    \node[above left = -3pt and -3pt of q\x\y] {$\n$}; 
				}
                \foreach \v / \w in {00/11,00/01,01/11,12/11,21/11,21/10,11/10,10/20} {
					\draw (q\v) edge (q\w);
				}
			\end{tikzpicture}		
            \end{subfigure}\hspace{2cm}
        \begin{subfigure}{0.27\textwidth}     
            \begin{tabular}{c|r}
					node & priority list \\
					\hline 
					$v_1$ & \colorbox{yellow}{$v_5$}, \colorbox{yellow}{$v_2$}, $v_1$ \\
                    $v_2$ & \colorbox{yellow}{$v_5$}, \colorbox{yellow}{$v_1$}, $v_2$ \\
                    $v_3$ & \colorbox{yellow}{$v_5$}, $v_3$ \\
                    $v_4$ & \colorbox{yellow}{$v_6$}, \colorbox{yellow}{$v_5$}, $v_4$ \\
                    $v_5$ & \colorbox{yellow}{$v_6$}, \colorbox{yellow}{$v_4$}, $v_1$, $v_2$, $v_3$,$v_5$ \\
                    $v_6$ & \colorbox{yellow}{$t$}, $v_4$, $v_5$, $v_6$
                    \end{tabular}
                    \end{subfigure}
    \caption{A graph and a perfectly resilient in-port oblivious skipping forwarding pattern for the target $t$.}
    \label{fig:inport-ex}
    \end{center}
\end{figure}
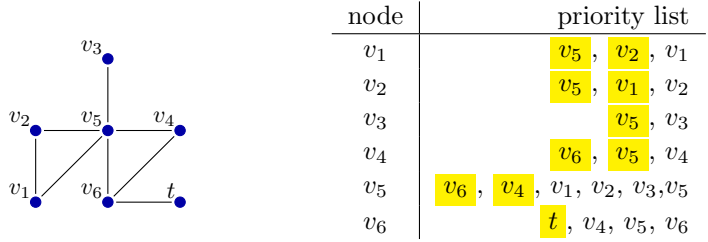

An example of a graph with a target node~$t$ allowing for a perfectly resilient skipping forwarding pattern is shown in~\cref{fig:inport-ex}.
Only the highlighted entries are important.
We can now conclude with the fact that both the verification and synthesis problems for in-port oblivious routings are efficiently solvable.
  
\begin{theorem}
   The verification and synthesis problems for \textsc{Perfect Resilience} using in-port oblivious local and skipping 
   routings are decidable in linear time.
\end{theorem}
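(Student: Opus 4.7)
By Lemma~\ref{lem:inport}, a graph $G$ admits a perfectly resilient in-port oblivious forwarding pattern (whether local or skipping) if and only if every simple cycle of $G$ has length at most $3$. The plan is to observe that this condition is equivalent to every biconnected component (block) of $G$ being either a single edge or a triangle, a property testable in linear time via Tarjan's block decomposition. The equivalence holds because any $2$-connected graph on $n \geq 4$ vertices has minimum degree at least $2$ and hence circumference at least $\min(n, 4) = 4$ by a classical result on longest cycles, yielding a simple cycle of length at least $4$; conversely, a block on at most $3$ vertices contributes only triangles or no cycles at all. Thus the synthesis problem reduces to running the block decomposition and returning YES if and only if every block has at most three vertices.

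For verification, the plan is to run the same structural check first; if it fails, return NO, since no in-port oblivious perfectly resilient pattern exists for $G$. Otherwise I would root the block tree at $t$ and, for every vertex $v \neq t$, identify the block $B_v$ of $v$ that lies on the path to $t$ in the block tree. The proof of Lemma~\ref{lem:inport} pins down the mandatory initial entries of $\rho_v$: if $B_v = \{v, p\}$ is a bridge, then the first edge listed by $\rho_v$ must be $\{v, p\}$; if $B_v = \{v, w, g\}$ is a triangle with $g$ the cut vertex leading toward $t$ (or $g = t$ itself), then the first edge of $\rho_v$ must be $\{v, g\}$ and the second must be $\{v, w\}$. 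Later entries in the priority list are irrelevant, because any failure scenario that reaches them simultaneously severs every $v$-to-$t$ path (all such paths must use an edge of $B_v$ incident to $v$, and all those edges have failed). For input given as an in-port oblivious local table rather than a skipping list, the check is analogous: for each $v$ and each $F \subseteq \local(v)$, verify that $\rho_v^F$ equals the canonical next hop determined by which toward-$t$ edges of $v$ lie in $F$. In either representation each check takes constant time per node or per table entry, so the total runtime is linear in the input size.

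The main obstacle will be arguing sufficiency of the per-node local conditions; necessity follows from the forward direction of Lemma~\ref{lem:inport}, since any violation at $v$ can be extended to a global failure scenario that creates a forwarding loop while keeping $v$ connected to $t$. For sufficiency, I plan to reuse the induction on the block tree from the backward direction of Lemma~\ref{lem:inport}: whenever the packet sits at a vertex $v$ that is still connected to $t$ in $G \setminus F$, the first non-failed edge in $\rho_v$ either crosses directly into the parent block (making strict progress in the block tree) or, in the triangle case, detours through $w$ before $w$ forwards to $g$; either way the packet leaves $B_v$ in at most two hops and cannot re-enter it, ruling out forwarding loops and ensuring delivery to $t$ in finitely many steps.
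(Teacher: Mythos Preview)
Your proposal is correct and shares the paper's high-level strategy: both lean on Lemma~\ref{lem:inport} for synthesis and then verify a given pattern by comparing it against the essentially unique canonical pattern constructed in that lemma's proof. The difference lies in how the structural condition ``no simple cycle of length~$\geq 4$'' is tested in linear time. The paper does this with a bespoke modified BFS from~$t$, checking that every non-tree edge joins two BFS siblings and that no vertex is incident to more than one non-tree edge. You instead observe that the condition is equivalent to every biconnected component having at most three vertices---justified via Dirac's circumference bound $c(G) \geq \min(n,2\delta)$ for $2$-connected graphs---and invoke Tarjan's block decomposition as a black box. Your route is arguably cleaner since it reduces to a standard linear-time primitive, and the block-tree language makes the per-vertex verification checks (bridge versus triangle block toward~$t$) more explicit than the paper's terse ``check that the important priority list entries agree.'' The necessity argument---that any deviation from the canonical pattern can be witnessed by a concrete failure scenario---is handled the same way in both: isolate the wrongly chosen neighbour by failing all its other incident links, observe the resulting two-node forwarding loop, and note that the canonical neighbour still provides a path to~$t$.
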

\begin{proof}
   First, by using Lemma~\ref{lem:inport}, we shall check
   for the existence of cycles of length $4$ or more in the connected component containing~$t$.
   Such a check can be performed in linear time using, e.g., a slightly modified BFS from $t$ where for each node $v$, we keep track of the parent $p(v)$ and for each link not part of the BFS tree, we check that the two endpoints have the same parent and each node is only incident to one such link.
    If this linear-time check fails, we
    output that no perfectly resilient local forwarding pattern exists. Otherwise, by Lemma~\ref{lem:inport} there is an in-port oblivious skipping forwarding pattern~$\rho$ and we have a yes-instance for the synthesis problem.
    Algorithm~\ref{alg:inportoblivious-synthesis} is a pseudocode implementation of this algorithm.
    
    For the verification problem, we just have to check that the important priority list entries for the given skipping or combinatorial routing agree with the choices made by $\rho$ in any local failure scenario.
   If they agree with the entries computed by the procedure described
    in Lemma~\ref{lem:inport} then the given forwarding pattern is perfectly resilient.
    
    It remains to show that whenever the forwarding pattern does not agree with the forwarding pattern described in \cref{lem:inport} in important entries, then it is not perfectly resilient.
    To this end, observe that each important entry in the forwarding pattern for a node~$v$ is a node on a path from~$v$ to~$t$.
    Consider a failure scenario~$F_1$ where the assumed forwarding pattern chooses a different entry than the one behind \cref{lem:inport}.
    Let~$u$ be the one chosen by~\cref{lem:inport} and~$u'$ be the one chosen by the assumed solution.
    Note that~$u' \neq t$.
    Consider a failure scenario where all links in~$F_1$ incident to~$v$ fail and all links incident to~$u'$ except for~$\{v,u'\}$.
    Then, the forwarding pattern has to return the package to the initial node~$v$ and hence we obtain a forwarding loop.
    However, there is still a path from~$v$ to~$t$ via~$u$ which consists only of active links, a contradiction to the assumed forwarding pattern to achieve perfect resilience.
\end{proof}

\begin{algorithm}[t]
    \caption{Tests the existence of an in-port oblivious perfectly resilient forwarding pattern. Takes a graph $G=(V,E)$ and a target node~$t$ as input and outputs $\text{true}$ if such a routing exists; $\text{false}$ otherwise.}
    \label{alg:inportoblivious-synthesis}
    \begin{algorithmic}[1]
        \Procedure{Synthesis}{$V, E, t$}
            \State Initialize queue $Q \gets [t]$
            \ForAll{$v \in V$}
                \State $visited[v] \gets \text{false}$
                \State $parent[v] \gets \text{null}$
            \EndFor
            \State $visited[t] \gets \text{true}$
            \While{$Q \neq \emptyset$}
                \State $u \gets$ \Call{Dequeue}{$Q$}
                \State $cycleFound \gets \text{false}$
                \ForAll{$\{u,v\} \in E$}
                    \If{not $visited[v]$}
                        \State $visited[v] \gets \text{true}$
                        \State $parent[v] \gets u$
                        \State \Call{Queue}{$Q, v$}
                    \Else \Comment{$\{u, v\}$ is an off-tree edge}
                        \If{$parent[u] \neq parent[v]$}
                            \State \textbf{return} \text{false}
                        \EndIf
                        \If{$cycleFound$} \Comment{$u$ has more than one off-tree edge}
                            \State \textbf{return} \text{false}
                        \EndIf
                        \State $cycleFound \gets \text{true}$
                    \EndIf
                \EndFor
            \EndWhile
            \State \textbf{return} \text{true}
        \EndProcedure
    \end{algorithmic}
\end{algorithm}

\subsection{Hardness of Perfect Resilience Verification}

In this section, we show that the verification problem for perfect resilience  is \coNP-complete.
This results contrasts the fact that the seemingly more
difficult synthesis problem is decidable in polynomial time (see Theorem~\ref{thm:complexity}).

\begin{theorem}
    \label{thm:perfecthard}
    Verification for \textsc{Perfect Resilience} with skipping priorities is \coNP-complete even if the input graph is planar.
\end{theorem}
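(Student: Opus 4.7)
Containment in \coNP{} is already given by \cref{thm:complexity}, so I only need to show \coNP-hardness. The plan is a polynomial-time reduction from (planar) 3-\sat{} to the complement of the verification problem, i.e.\ to the question whether there exists a failure scenario $F$ and a source $s$ such that $s$ is still connected to the target $t$ in $G\setminus F$ but the routing determined by $\rho, s, F$ enters a forwarding loop. Given a planar 3-CNF formula $\varphi$, the goal is to construct in polynomial time a planar graph $G$, a target $t$, a source $s$, and a skipping forwarding pattern~$\rho$ such that $\varphi$ is satisfiable if and only if $\rho$ is not perfectly resilient.

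I would use three kinds of gadgets. For each variable $x_i$, a \emph{variable gadget} supplies two distinguished edges, one representing $x_i = \true$ and the other $x_i = \false$; a canonical failure scenario $F_A$ encoding an assignment $A$ fails the edge representing the complementary truth value. For each clause $C_j$, a \emph{clause gadget} is attached, via planar wire subgraphs, to the literal edges of the occurring variables. Its skipping priorities instruct an arriving packet to try the three literal edges in a fixed order and to forward the packet onwards to the next clause gadget only if at least one of these literal edges is still active; if all three fail, it is diverted along a fallback path that leads straight to $t$. Finally, after the last clause gadget has been successfully traversed, the packet enters a small \emph{trap gadget} whose hard-wired skipping priorities create a forwarding loop, while a parallel auxiliary path present only in the underlying graph (never used by~$\rho$ under~$F_A$) keeps $s$ connected to~$t$. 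Planarity is ensured by starting from a planar variable--clause incidence graph (Lichtenstein's planar 3-\sat) and drawing variable gadgets along a horizontal spine with clause gadgets in the faces; the wire and trap gadgets are placed so as not to break planarity.

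Correctness splits into two directions. For the forward direction, a satisfying assignment~$A$ yields the failure scenario~$F_A$ under which the packet passes through every clause gadget, reaches the trap, and loops forever while $s$ remains connected to~$t$ through the auxiliary path, witnessing non-resilience. For the backward direction, given any witness $(F, s)$ of non-resilience, one decodes~$F$ into a truth assignment~$A_F$ by reading off which of the two distinguished edges of each variable gadget is still active; the fact that $\rho$ loops under~$F$ rather than reaching~$t$ then forces every clause gadget to have been ``escaped via an active literal'', so that every clause is satisfied by~$A_F$.

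The main obstacle will be handling \emph{non-canonical} failure scenarios~$F$ that do not have the shape $F_A$, for instance $F$ failing both distinguished edges of some variable gadget, cutting an internal wire, or removing edges from the trap. Each such deviation must either (a) already disconnect $s$ from $t$ in $G\setminus F$, making the perfect-resilience condition vacuous at that source, or (b) force the routing to terminate at~$t$, so that no spurious counterexample arises. Engineering gadgets and skipping priorities that robustly enforce (a)~or~(b) for \emph{all} failure patterns, while simultaneously remaining planar and obeying the strong syntactic restriction of skipping priority lists (a single per-port permutation must dictate the behavior for every failure mask), is the technical crux of the proof.
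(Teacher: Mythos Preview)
Your proposal is a plan, not a proof: you explicitly identify the construction of the gadgets and the handling of non-canonical failure scenarios as ``the technical crux'' and then stop. Everything hinges on that crux, so as written there is nothing to check. One concrete pitfall already visible in the sketch is the trap gadget. If its skipping priorities ``hard-wire'' a forwarding loop, then consider injecting a packet (via the self-loop in-port) directly at a trap node under the empty failure scenario: the packet loops while the source is trivially connected to~$t$, so the routing fails to be perfectly resilient for \emph{every} formula, satisfiable or not. You can try to rescue this by making the trap loop only on the in-port coming from the last clause gadget and route to~$t$ from every other in-port, but then you must also argue that no failure pattern on the wires can steer a packet into the trap through that distinguished port without first certifying a satisfying assignment. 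That is exactly the kind of global robustness argument you defer.

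The paper avoids all of this with one structural idea that also gives planarity for free: route everything through a single hub node~$c$ adjacent to~$t$ and to all variable and clause nodes. The resulting graph is a star with a few extra edges between neighbouring leaves (the $p$--$n$ pairs), hence planar without invoking planar 3-\sat. There is no separate trap; the infinite loop is simply the cycle through all variable triangles and all clause leaves and back to the first variable. Because~$c$ sees the in-port of every returning packet, its priority lists encode the entire control flow (``if the packet just came back from $n_{x_i}$, try $n_{\overline{x_i}}$; if that fails, move on to variable $i{+}1$; \ldots''), and the analysis of an arbitrary failure scenario becomes a short local case distinction at~$c$: any deviation from the intended pattern forces~$c$ to receive the packet on a $p$-port, whose priority list sends it straight to~$t$. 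This hub design collapses your wires, clause boxes, and trap into the routing table of a single node.
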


\begin{proof}
    Containment in \coNP is proved in Theorem~\ref{thm:complexity}.
    To show \coNP-hardness, we present a reduction from \textsc{3-Sat}, a variant of \textsc{Satisfiability} where each clause contains exactly~$3$ literals.
    This problem is among Karp's original 21 \NP-complete problems~\cite{Karp72}.
    We show a polynomial-time reduction from \textsc{3-Sat} to the verification problem of \textsc{Perfect Resilience} where the constructed skipping routing is perfectly resilient if and only if the original instance of \textsc{3-Sat} (a formula~$\phi$) is \emph{not} satisfiable.
    
    Let~$\mathcal{V}=\{x_1,x_2,\ldots,x_{n}\}$ and~${\mathcal{C}=\{C_1,C_2,\ldots,C_{m}\}}$ be the set of variables and clauses of an input formula~$\phi$.
    We construct an instance of the verification problem for \textsc{Perfect Resilience} with skipping priorities.
    We start with two nodes~$c$ and~$t$, which are connected by a link and where~$t$ is the target.
    Next, for each variable~${x_i \in \mathcal{V}}$, we add four nodes~$p_{x_i}, n_{x_i}, p_{\overline{x_i}}$, and~$n_{\overline{x_i}}$ and the six links~$\{c,p_{x_i}\}, \{c,n_{x_i}\}, \{p_{x_i},n_{x_i}\}, \{c,p_{\overline{x_i}}\}, \{c,n_{\overline{x_i}}\}$, and~$\{p_{\overline{x_i}},n_{\overline{x_i}}\}$.
    For each clause~$C_j \in \mathcal{C}$, we add three nodes~$u_j,v_j,w_j$ and three links~$\{c,u_j\},\{c,v_j\},\{c,w_j\}$.
    See \cref{fig:network1} for an example of the construction and note that the graph is planar and the number of nodes and links is linear in the number of variables and clauses in~$\phi$.
    We continue with the priority lists.
    Note that if the link~$\{c,t\}$ fails, then only packets starting in~$t$ can reach~$t$.
    Hence, we assume in the remainder of the proof that this link does not fail.
    Hence, we only need to state the priority lists for~$c$ (for different in-ports) up to~$t$.
    The priority lists are given in \cref{fig:table1}.
    \begin{figure}[t]
        \centering
        \begin{subfigure}[t]{0.4\textwidth}
        \centering
        \vspace{-10cm}
        \begin{tikzpicture}
            \node[circle,draw,inner sep=3pt] (c) at(0,0) {$c$};
            \foreach \i/\j in {1/$p_{x_1}$,2/$n_{x_1}$,3/$p_{\overline{x_1}}$,4/$n_{\overline{x_1}}$,5/$p_{x_2}$,6/$n_{x_2}$,7/$p_{\overline{x_2}}$,8/$n_{\overline{x_2}}$,9/$p_{x_3}$, 10/$n_{x_3}$,11/$p_{\overline{x_3}}$,12/$n_{\overline{x_3}}$,13/$u_1$,14/$v_1$,15/$w_1$,16/$u_2$,17/$v_2$,18/$w_2$,19/$t$}{
                \node[circle,draw] at(90-\i*360/19:1.5) (v\i) {};
                \node at(90-\i*360/19:2) {\j};
            }
            \foreach \i in {1,2,...,19}{
                \draw[-] (v\i) to (c);
            }
            \draw[-] (v1) to (v2);
            \draw[-] (v3) to (v4);
            \draw[-] (v5) to (v6);
            \draw[-] (v7) to (v8);
            \draw[-] (v9) to (v10);
        \end{tikzpicture}
        \caption{Construction of a graph for a formula~$\phi$ with three variables and two clauses.}
    \label{fig:network1}
\end{subfigure}\hspace{1cm}
\begin{subfigure}{0.4\textwidth}
        \begin{tabular}{c|c|r}
             node & in-port & priority list \\\hline
             $p_{x_i}$ & $c$ & $n_{x_i}, c$\\
             $p_{x_i}$ & $n_{x_i},p_{x_i}$ & $c, n_{x_i}$\\
             $n_{x_i}$ & $c$ & $p_{x_i}, c$\\
             $n_{x_i}$ & $p_{x_i},n_{x_i}$ & $c, p_{x_i}$\\
             $p_{\overline{x_i}}$ & $c$ & $n_{\overline{x_i}}, c$\\
             $p_{\overline{x_i}}$ & $n_{\overline{x_i}},p_{\overline{x_i}}$ & $c, n_{\overline{x_i}}$\\
             $n_{\overline{x_i}}$ & $c$ & $p_{\overline{x_i}}, c$\\
             $n_{\overline{x_i}}$ & $p_{\overline{x_i}},n_{\overline{x_i}}$ & $c, p_{\overline{x_i}}$\\\hline
             $u_j$ & $*$ & $c$\\
             $v_j$ & $*$ & $c$\\
             $w_j$ & $*$ & $c$\\\hline
             $c$ & $c$ & $p_{x_1}, p_{\overline{x_1}}, t$\\
             $c$ & $p_{x_i},p_{\overline{x_i}}$ & $t$\\
             $c$ & $n_{x_i}$ & $n_{\overline{x_i}}, p_{x_{i+1}}, p_{\overline{x_{i+1}}}, t$\\
             $c$ & $n_{\overline{x_i}}$ & $n_{x_i}, p_{x_{i+1}}, p_{\overline{x_{i+1}}}, t$\\
             $c$ & $n_{x_{n}}$ & $n_{\overline{x_{n}}}, u_1, v_1, w_1, t$\\
             $c$ & $n_{\overline{x_{n}}}$ & $n_{x_{n}}, u_1, v_1, w_1, t$\\
             $c$ & $u_m$ & $n_{q_m}, p_{x_1}, p_{\overline{x_1}}, t$\\
             $c$ & $v_m$ & $n_{r_m}, p_{x_1}, p_{\overline{x_1}}, t$\\
             $c$ & $w_m$ & $n_{s_m}, p_{x_1}, p_{\overline{x_1}}, t$\\
             $c$ & $u_j$ & $n_{q_j}, u_{j+1}, v_{j+1}, w_{j+1}, t$\\
             $c$ & $v_j$ & $n_{r_j}, u_{j+1}, v_{j+1}, w_{j+1}, t$\\
             $c$ & $w_j$ & $n_{s_j}, u_{j+1}, v_{j+1}, w_{j+1}, t$
        \end{tabular}
        \caption{Table of the priority lists for all nodes (except for the target~$t$). The symbol~$*$ stands for the case that the priority list does not depend on the in-port. We use~$q_j,r_j,$ and~$s_j$ to denote the three literals in clause~$C_j$, respectively. As an example, if clause~$C_0 = (x \lor y \lor \overline{z})$, then~$q_0 = x$, $r_0 = y$, and~$s_i = \overline{z}$. Priority lists for node~$c$ are only shown up to the target~$t$.}
    \label{fig:table1}
\end{subfigure}
        \caption{Example of the construction of an instance in the proof of \cref{thm:perfecthard}.}
        \label{tab:prio}
    \end{figure}
    Before we formally state the proof, let us give some high-level intuition.
    The goal is to find a set of failing links that cause the packet to loop forever between~$c$ and variable nodes~($p_{x_i},n_{x_i},p_{\overline{x_i}}$ and~$n_{\overline{x_i}}$) and clause nodes~($u_j,v_j$, and~$w_j$).
    First, if the center node~$c$ ever receives the packet from a node~$p_{x_i}$ or~$p_{\overline{x_i}}$ for some~$x_i$, then it immediately forwards the packet to~$t$.
    We next show for each~$i$ that exactly one of the links~$\{c,n_{x_i}\}$ and~$\{c,n_{\overline{x_i}}\}$ fails in any solution.
    Assume that~$c$ receives the packet from~$n_{x_{i-1}}$ or~$n_{\overline{x_{i-1}}}$ (or from~$u_{m},v_m,w_m$ or it starts in~$c$ for~$i=1$).
    Then not both the links~$\{c,p_{x_i}\}$ and~$\{c,p_{\overline{x_i}}\}$ can fail as otherwise the packet is sent to~$t$.
    Assume the packet is sent to~$p_{\overline{x_i}}$ (the other case is symmetric).
    If either of the links~$\{p_{\overline{x_i}},n_{\overline{x_i}}\}$ or~$\{n_{\overline{x_i}},c\}$ fails, then the packet is sent back from~$p_{\overline{x_i}}$ to~$c$ and then sent to~$t$.
    Hence, neither of these links fail and the packet is forwarded over these links.
    Now assume the link~$\{c,n_{x_i}\}$ does not fail.
    Then, the packet is sent through that link.
    Now at least one of the links~$\{n_{x_i},p_{x_i}\}$ and~$\{p_{x_i},c\}$ has to fail or~$c$ receives the packet from~$p_{x_i}$.
    Hence, the packet is sent back to~$c$ from~$n_{x_i}$.
    However, by construction the packet is now sent to~$n_{\overline{x_i}}$ and consequently to~$p_{\overline{x_i}}$, back to~$c$, and then to~$t$.
    Thus, exactly one of the links~$\{c,n_{x_i}\}$ and~$\{c,n_{\overline{x_i}}\}$ fails in any solution and whichever does not fail, the entire respective loop also does not fail.
    We say that if the link~$\{c,n_{x_i}\}$ fails, then~$x_i$ is set to true and if the link~$\{c,n_{\overline{x_i}}\}$ fails, then~$x_i$ is set to false.

    We now show that the reduction is correct, that is, $\phi$ is satisfiable if and only if the constructed routing is not perfectly resilient.
    In the forward direction, assume that a satisfying assignment~$\beta$ for~$\phi$ exists.
    We let the links~$\{c,p_{x_i}\}$ and~$\{c,n_{x_i}\}$ fail for each variable~$x_i$ set to true by~$\beta$ and we let the links~$\{c,p_{\overline{x_i}}\}$ and~$\{c,n_{\overline{x_i}}\}$ fail whenever~$x_i$ is set to false by~$\beta$.
    Moreover, for each clause~${C_j = (q_j \lor r_j \lor s_j)}$, we pick one variable that satisfies the clause under~$\beta$.
    If~$q_j$ is picked, then we let links~$\{c,v_j\}$ and~$\{c,w_j\}$ fail.
    If~$r_j$ is picked, then we let links~$\{c,u_j\}$ and~$\{c,w_j\}$ fail.
    If~$s_j$ is picked, then we let links~$\{c,u_j\}$ and~$\{c,v_j\}$ fail.
    Now, the packet goes through~$p_{x_i}$ and~$n_{x_i}$ or~$p_{\overline{x_i}}$ and~$n_{\overline{x_i}}$ for all~$i$ in increasing order.
    Then, for each clause~$C_j$ in increasing order, it goes to the previously chosen node ($u_j,v_j$, or~$w_j$) and immediately returns to~$c$.
    Then, it goes to the next clause as we have a satisfying assignment and hence the link to~$n_{q_j}/n_{r_j}/n_{s_j}$ fails.
    Afterwards it again goes through all variables and this cycle continues indefinitely.
    Thus, the constructed routing is not perfectly resilient.

    In the other direction, assume that the constructed routing is not perfectly resilient and let~$F$ be a failure scenario where the packet does not reach~$t$ from some source node.
    We have already shown that exactly one of the links~$\{c,n_{x_i}\}$ and~$\{c,n_{\overline{x_i}}\}$ fails for each variable~$x_i$ and how this corresponds to an assignment.
    If the packet reaches some clause node~$u_j,v_j$, or~$w_j$ (or starts there and there exists some path to~$c$ and~$t$), then it is sent to~$c$.
    Then, $c$ tries to send the packet to a node~$n_{x_i}$ or~$n_{\overline{x_i}}$ for some~$x_i$.
    If this link does not fail (which corresponds to the case where the clause~$C_j$ is not satisfied by the respective assignment of the corresponding variable), then the packet is sent to the respective node~$n_y$ and subsequently to~$p_y$ and back to~$c$ from~$p_y$ as the entire loop is not failing whenever the link~$\{c,n_y\}$ is not failing (as shown above).
    If all three links to a clause gadget fail, then the packet is directly sent to~$t$, so we may assume that at least one of them is not failing.
    The first one that is not failing will encode the satisfying assignment for the respective clause.
    The packet is then sent to one of the three nodes for the next clause and the cycle continues.
    This cycle can only continue indefinitely, if all clauses are satisfied by the chosen assignment, that is, the formula~$\phi$ is satisfiable.
    This concludes the proof. We note that the hardness result holds even in case that we know
    the source of the packet as, e.g., injecting the packet at $c$ is already enough to argue
    for \coNP hardness.
\end{proof}

\subsection{Hardness of Ideal Resilience Verification}

We next turn towards ideal resilience.
We also show that the problem is \coNP-complete, however this time we cannot show hardness for the special case of planar graphs.
Note that any~$6$-connected graph has no nodes of degree less than six.
Hence, no such graph is planar as planar graphs are~$5$-degenerate, meaning that each subgraph (including the entire graph) has a node of degree at most~$5$.
Since the verification problem can be solved in polynomial-time for each constant~$k$, the verification problem for \textsc{Ideal Resilience} actually becomes polynomial-time solvable if the input graph is planar.

\begin{theorem}
    \label{thm:idealhard}
    Verification for \textsc{Ideal Resilience} with skipping priorities is \coNP-complete.
\end{theorem}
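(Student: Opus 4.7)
The plan is to adapt the 3-SAT reduction from the proof of \cref{thm:perfecthard} by raising the link-connectivity of the constructed graph while preserving its bad failure scenarios. Given a \textsc{3-Sat} instance $\phi$ with $n$ variables and $m$ clauses, I would start from the exact graph $G_\phi$, target $t$, and skipping priority lists constructed in that proof. A direct inspection shows that the bad failure scenario produced there from any satisfying assignment uses exactly $2n+2m$ failed links (two per variable gadget and two per clause gadget), so I set $k := 2n + 2m + 1$ and aim to augment $G_\phi$ into a $k$-link-connected graph whose resulting forwarding pattern is not ideally resilient iff $\phi$ is satisfiable.

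For the augmentation I would introduce $k$ fresh helper nodes $h_0, \ldots, h_{k-1}$, take the $(k+1)$-clique on $\{h_0, \ldots, h_{k-1}, t\}$, and connect every helper by an edge to every original non-target node. At every original node $v$, I would insert the helper edges $\{v, h_0\}, \ldots, \{v, h_{k-1}\}$ into every existing priority list immediately before the self-loop, so helper entries are lower priority than any original entry. At every helper $h_i$, I would use the in-port-oblivious priority list whose head is $\{h_i, t\}$, followed by $\{h_i, h_{(i+1) \bmod k}\}, \{h_i, h_{(i+2) \bmod k}\}, \ldots, \{h_i, h_{(i-1) \bmod k}\}$, then the edges to original non-target nodes, and finally the self-loop. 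By Menger's theorem the augmented graph is $k$-link-connected since any two nodes admit $k$ edge-disjoint paths routed through the $k$ distinct helpers.

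The forward direction is then immediate: the bad failure scenario from \cref{thm:perfecthard} has size $2n + 2m = k - 1$, and at every original node visited by the packet, the first active priority-list entry is still an original edge, so the appended helper entries are never consulted; the packet executes the same infinite loop as in the original proof, so the forwarding pattern is not ideally resilient. For the backward direction, suppose $F$ with $|F| \le k-1$ witnesses non-ideal resilience. I would first argue that the resulting periodic trajectory visits no helper: at every helper $h_i$ the first $k$ priority-list entries are the $k$ edges from $h_i$ into the clique $\{h_0, \ldots, h_{k-1}, t\}$, and since $|F| \le k-1$ not all of them can fail; so from any helper the packet stays inside this clique. A short counting argument then shows that the packet cannot cycle indefinitely among helpers: if a periodic visit-sequence among helpers is $h_{i_0}, h_{i_1}, \ldots, h_{i_{c-1}}, h_{i_0}$ with cyclic jumps $r_s = (i_{s+1} - i_s) \bmod k \ge 1$, then each $h_{i_s}$ contributes one failed edge $\{h_{i_s}, t\}$ (else the packet would have exited to $t$) and $r_s - 1$ failed helper-to-helper edges (for the skips in the cyclic order), giving $|F| \ge \sum_s r_s$, which is a positive multiple of $k$ and hence at least $k$, contradicting $|F| \le k-1$. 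So the periodic trajectory lies entirely in $G_\phi$, and at every visited original node the first active priority-list entry is an original edge. Thus the augmented and original forwarding patterns coincide along the trajectory, and $F$ restricted to the original edges witnesses a loop in $G_\phi$; by the analysis in the proof of \cref{thm:perfecthard} this encodes a satisfying assignment of $\phi$.

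The main obstacle will be the helper-gadget analysis: namely, making the cyclic counting argument rigorous and choosing the helper priority lists carefully enough that the packet is really trapped inside the helper-plus-target clique and cannot cycle there. The cyclic priority order is essential for the counting bound $\sum_s r_s \ge k$; a naive global order (e.g., always try $h_0, h_1, \ldots$ in the same order at every helper) would admit spurious $2$-cycles using only two failures and break the reduction. Once the helper gadget is under control, the remainder is a direct lifting of the proof of \cref{thm:perfecthard}, and \coNP-containment follows from \cref{thm:complexity}.
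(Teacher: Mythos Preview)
Your approach---bolting a fully-connected ``absorber'' onto a \textsc{3-Sat} gadget and using cyclic priority lists inside it so that any packet entering the absorber is forced to reach $t$---is exactly the idea the paper uses. The paper, however, builds its variable and clause gadgets from scratch rather than reusing the construction of \cref{thm:perfecthard}; in its graph the absorber clique $K$ does not contain $t$, and $t$ is adjacent \emph{only} to the $2n+2m+1$ nodes of $K$, which pins the link-connectivity at exactly $2n+2m+1$.

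This is where your construction has a concrete gap. In your augmented graph the $k+1$ vertices $c,h_0,\dots,h_{k-1}$ form a clique (the helpers are pairwise adjacent and each is adjacent to the original non-target node $c$), and every remaining vertex---including $t$, via the original edge $\{c,t\}$ together with the helper--$t$ edges---is adjacent to all $k+1$ of them. A routine Menger argument then gives link-connectivity $k+1$, not $k$ as you claim. Ideal resilience must therefore be checked against failure sets of size up to $k$, and your helper-trap argument breaks: with $|F|=k$ one can fail all $k$ edges from a single helper $h_i$ into $\{h_0,\dots,h_{k-1},t\}$, after which the packet leaves the absorber to an original node (through a priority-list row for a helper in-port that you never actually specified), and the cyclic counting bound $\sum_s r_s\ge k$ no longer yields a contradiction. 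The approach is salvageable---for instance, add a dummy vertex adjacent only to the $k$ helpers to force the connectivity down to exactly $k$, and stipulate that original nodes receiving a packet on a helper in-port forward it straight back into the helpers---but as written the reduction does not establish the claimed equivalence.
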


\begin{proof}
    As containment in \coNP is stated in Theorem~\ref{thm:complexity}, we focus on the hardness part. 
    We again present a reduction from \textsc{3-Sat} where the number of nodes is linear in the number of variables and clauses in the input formula and where the constructed forwarding pattern is ideally resilient if and only if the input formula is not satisfiable.
     
    Let~$\phi$ be a formula with variables~${\mathcal{V}=\{x_1,x_2,\ldots,x_n\}}$ and clauses~$\mathcal{C}=\{C_1,C_2,\ldots,C_m\}$.
    We start with a clique~$K$ of~$2n+2m+1$ nodes~$u_0,u_1,\ldots,u_{2n+2m}$.
    All of the following nodes will be connected to all nodes in~$K$.
    Next, we add the target~$t$, three nodes~$v_i,x_i$ and~$\overline{x_i}$ for each variable~$x_i \in \mathcal{V}$, and two node~$c_j$ and~$d_j$ for each clause~$C_j \in \mathcal{C}$.
    We also add a node~$v_{n+1}$.
    Apart from the links to~$K$, we add the following links for each variable~$x_i$:
    $\{v_i,x_i\},\{v_i,\overline{x}_i\},\{x_i,\overline{x}_i\},\{x_i,v_{i+1}\},\{\overline{x}_i,v_{i+1}\}$.
    For each clause~$C_j$, let~$p_j,q_j,r_j$ be the three literals appearing in~$C_j$.
    We add the links~$\{c_j,p_j\},\{c_j,q_j\},\{c_j,r_j\},\{d_{j},p_j\},\{d_{j},q_j\},\{d_{j},r_j\}$.
    Finally, we add the links~$\{d_j,c_{j+1}\}$ and~$\{c_{j+1},d_j\}$ for each~$j < m$ and the links~$\{d_m,c_1\}$ and~$\{c_1,d_m\}$.
    An example of the constructed network is shown in \cref{fig:ideal}.
    \begin{figure}[t]
        \centering
        
        \begin{tikzpicture}
            \foreach \i in {1,2,3,4}{
                \node[circle,draw,label=$v_{\i}$] at(2*\i,0) (v\i) {};
                \node[circle,draw,label=$x_{\i}$] at(2*\i+1,1) (x\i) {} edge[-](v\i);
                \node[circle,draw,label=left:$\overline{x_{\i}}$] at(2*\i+1,-1) (n\i) {} edge[-](v\i) edge[-](x\i);
            }
            \node[circle,draw,label=$v_5$] at(10,0) (v5) {} edge[-](x4) edge[-](n4);
            \draw[-](x1) to (v2);
            \draw[-] (n1) to (v2);
            \draw[-] (x2) to (v3);
            \draw[-] (n2) to (v3);
            \draw[-] (x3) to (v4);
            \draw[-] (n3) to (v4);

            \node[circle,draw,label=below:$c_1$] at(8,-3) (c1) {} edge[-] (v5);
            \node[circle,draw,label=below:$c_2$] at(6,-3) (c2) {};
            \node[circle,draw,label=below:$c_3$] at(4,-3) (c3) {};
            \node[circle,draw,label=below:$d_1$] at(7,-3) (d1) {} edge[-](c2);
            \node[circle,draw,label=below:$d_2$] at(5,-3) (d2) {} edge[-](c3);
            \node[circle,draw,label=below:$d_3$] at(3,-3) (d3) {} edge[-,bend right=40](c1);

            \draw[-,bend right=15] (c1)  to (x1);
            \draw[-,bend left=10] (c1)  to (x2);
            \draw[-] (c1)  to (n4);
            \draw[-,bend left=20] (d1)  to (x1);
            \draw[-] (d1)  to (x2);
            \draw[-] (d1)  to (n4);

            \draw[-] (c2)  to (n1);
            \draw[-] (c2)  to (n3);
            \draw[-] (c2)  to (x4);
            \draw[-] (d2)  to (n1);
            \draw[-] (d2)  to (n3);
            \draw[-,bend right=10] (d2)  to (x4);

            \draw[-] (c3)  to (x1);
            \draw[-] (c3)  to (n2);
            \draw[-,bend right=10] (c3)  to (x3);
            \draw[-,bend left=10] (d3)  to (x1);
            \draw[-] (d3)  to (n2);
            \draw[-,bend right=10] (d3)  to (x3);
        \end{tikzpicture}
        \caption{Construction of the graph in the proof of \cref{thm:idealhard} for the formula \[\phi = (x_1 \lor x_2 \lor \neg x_4) \land (\neg x_1 \lor \neg x_3 \lor x_4) \land (x_1 \lor \neg x_2 \lor x_3)\] excluding the clique~$K$ and the target~$t$.}
        \label{fig:ideal}
    \end{figure}
    We continue with the priority lists and list them in \cref{tab:prio2}.
    Note that~$K$ ensures that the graph is~$(2n+2m+1)$-connected and~$t$ has exactly~$2n+2m+1$ incident links.
    Thus, we assume that at most~$k=2n+2m$ links fail and we only present the first~$k+1$ entries in each priority list.
    \begin{figure}[t]
        \centering
        \begin{tabular}{c|c|r}
             node & in-port & priority list \\\hline
             $u_i$ & $*$ & $t, u_{i+1}, \ldots u_{2n+2m},u_0,\ldots,u_{i-1}$\\\hline
             $v_1$ & $v_1$ & $x_1, \overline{x_1}, K$\\
             $v_1$ & $*$ & $K$\\
             $v_i$ & $x_{i-1}$ & $\overline{x_{i-1}}, x_i, \overline{x_i}, K$\\
             $v_i$ & $\overline{x_{i-1}}$ & $x_{i-1}, x_i, \overline{x_i}, K$\\
             $v_i$ & $*$ & $K$\\
              $v_{n+1}$ & $x_n$ & $\overline{x_n}, c_1, K$\\
             $v_{n+1}$ & $\overline{x_n}$ & $x_n, c_1, K$\\
             $v_{n+1}$ & $*$ & $K$\\
             \hline
             $x_i$ & $v_{i}$ & $\overline{x_i}, K$\\
             $x_i$ & $\overline{x_i}$ & $v_i, v_{i+1}, K$\\
             $x_i$ & $c_j$ & $v_i, d_j, K$\\
             $x_i$ & $*$ & $K$\\
             $\overline{x_i}$ & $v_i$ & $x_i, K$\\
             $\overline{x_i}$ & $x_i$ & $v_i, v_{i+1}, K$\\
             $\overline{x_i}$ & $c_j$ & $v_i, d_j, K$\\
             $\overline{x_i}$ & $*$ & $K$\\\hline
             $c_1$ & $v_{n+1},d_m$ & $p_1, q_1, r_1, K$\\
             $c_1$ & $*$ & $K$\\
             $c_j$ & $d_j$ & $p_j, q_j, r_j, K$\\
             $c_j$ & $*$ & $K$\\
             $d_m$ & $p_m,q_m,r_m$ & $c_1, K$\\
             $d_m$ & $*$ & $K$\\
             $d_j$ & $p_j,q_j,r_j$ & $c_{j+1},K$\\
             $d_j$ & $*$ & $K$
        \end{tabular}
        \caption{Table of the priority lists for all nodes (except for the target~$t$) in the construction in the proof of \cref{thm:idealhard}.
        The symbol $*$ stands for all in-ports that are not listed elsewhere in the table. We use~$p_i,q_i,$ and~$r_i$ to denote the three literals in clause~$C_i$, respectively. As an example, if clause~${C_1 = (x_1 \lor x_2 \lor \overline{x_4})}$, then~$p_1 = x_1$, $q_1 = x_2$, and~$r_1 = \overline{x_4}$. We only give the first~${k+1=2n+2m+1}$ entries in each priority list and~$K$ stands for the priority list~$u_0, u_1, \ldots, u_{2n+2m}$.}
        \label{tab:prio2}
    \end{figure}
    Before we formally state the proof, let us give some high-level intuition.
    We will later show that if the packet ever reaches a node in~$K$, then it reaches~$t$.
    Hence, for the packet to not reach~$t$, it has to start at node~$v_1$ as all other nodes immediately send it to some node in~$K$ (assuming at most~$k$ links fail).
    Next, we will show that for each~$i \in [n]$, if the packet reaches~$v_i$, then the only way to not be sent to a node in~$K$ is if either the links~$\{v_i,x_i\}$ and~$\{\overline{x_i},v_{i+1}\}$ or the links~$\{v_i,\overline{x_i}\}$ and~$\{x_i,v_{i+1}\}$ fail.
    In both cases, the packet reaches~$v_{i+1}$ and which of the links incident to~$v_i$ fails will encode the assignment of variable~$x_i$ in a solution.
    More specifically, we say in the first case that~$x_i$ is set to true and in the second case that it is set to false.
    Once the packet reaches~$v_{n+1}$, it is sent to~$c_1$ (or to~$K$ and subsequently to~$t$).
    Now, for each~$c_j$, if the three incident links to~$p_j,q_j$ and~$r_j$ fail, then the packet is sent to~$K$.
    So at least one of these links does not fail and when the packet is sent to one of these three nodes, then it is checked that the respective variable is set correctly to satisfy~$C_j$.
    If this is not the case, then the packet is sent to some~$v_i$, then to some node in~$K$ and subsequently to~$t$.
    So the only way to avoid this is to send the packet to some variable node satisfying the respective clause.
    In this case, the packet is forwarded to~$d_j$ and then to~$c_{j+1}$.
    Once the node~$d_m$ is reached (which corresponds to the case where the entire formula is satisfied by the chosen assignment), the routing continues with~$c_1$ and then repeats the cycle indefinitely.

    We now make the argument more formal.
    First, we show that if some node in~$K$ is reached and at most~$k$ links fail, then the target~$t$ is reached.
    Let~$i$ be the index such that~$u_i \in K$ is reached and assume towards a contradiction that there is a failure scenario~$F$ of size at most~$k=2m+2n$ such that~$t$ is not reached.
    By construction, if some node~$u_j \in K$ is reached and the link~$\{u_j,t\}$ does not fail, then~$t$ is reached.
    Hence,~$\{u_i,t\} \in F$.
    We will consider two indices~$a$ and~$b$, where initially~$a=i$ and~$b=i+1$.
    The node~$u_a$ is the last node reached in~$K$ and~$u_b$ is the next node to be visited, that is, the links~$\{u_a,u_j\}$ fail for all~$j \in \{a+1,a+2,\ldots,b-1\}$ if~$b > a$ and for all~$j \in \{a+1,a+2,\ldots,2n+2m,0,1,\ldots,b-1\}$ otherwise.
    If none of the links~$\{u_i,u_{i+1}\}$ and~$\{u_{i+1},t\}$ fail, then~$t$ is reached.
    If~$\{u_i,u_{i+1}\}$ does not fail, then we update~$u_a = u_b$.
    Otherwise, we keep~$u_a$ unchanged.
    In either case, we update~$u_b = u_{b+1}$ if~$b \neq 2n+2m$ and~$u_b = u_0$ otherwise.
    Note that for each node~$u_b$ with~$u \neq i$ it holds that some link~$\{u_a,u_b\}$ with~$a \in \{i,i+1,\ldots,b-1\}$ if~$i < b$ or~$a \in \{i,i+1,\ldots,2n+2m,0,1,\ldots,b-1\}$ or the link~$\{u_b,t\}$ fails.
    Since the same link is never counted for two different indices~$b$ (if~$a$ is between~$i$ and~$b$, then~$b$ cannot be between~$i$ and~$a$), this gives~$2n+2m$ links (in addition to~$\{u_i,b\}$) in~$F$, a contradiction to~${|F| \leq 2n+2m = k}$.
    Thus, if a node in~$K$ is reached and at most~$k$ links fail, then~$t$ is reached.

    To conclude the proof, note that if~$\phi$ is satisfiable, then let~$\beta$ be a satisfying assignment.
    For each variable~$x_i$, if~$\beta$ sets it to true, then we let links~$\{v_i,x_i\}$ and~$\{\overline{x_i},v_{i+1}\}$ fail and otherwise, we let links~$\{v_i,\overline{x_i}\}$ and~$\{x_i,v_{i+1}\}$ fail.
    Next, for each clause~$C_j$, we let the links between~$c_j$ and all literals that do not satisfy~$C_j$ under~$\beta$ fail.
    Since each clause has size three and~$\beta$ satisfies~$\phi$, this fails at most~$2m$ clauses.
    Now, as shown above, if the routing starts in~$v_1$, then it is routed through all variables and reaches~$v_{n+1}$.
    From there, it is forwarded to~$c_1$.
    Next, for each clause, it is forwarded to one literal such that~$C_j$ is satisfied by the respective literal under~$\beta$.
    Note that this corresponds to the case where the link between that literal ($x_i$ or~$\overline{x_i}$) and~$v_i$ fails.
    Hence, the packet is next sent to~$d_j$ and then to~$c_{j+1}$ (or from~$d_m$ to~$c_1$).
    Since this cycle continues indefinitely, the constructed forwarding pattern is not ideally resilient.

    In the other direction, if the constructed forwarding pattern is not ideally resilient, then the routing has to start in~$v_1$ and avoid~$K$ as shown above.
    Hence, for each variable~$x_i$ it holds that link~$\{v_i,x_i\}$ and~$\{\overline{x_i},v_{i+1}\}$ or~$\{v_i,\overline{x_i}\}$ and~$\{x_i,v_{i+1}\}$ fail.
    We define an assignment for all variables by setting~$x_i$ to true whenever~$\{v_i,x_i\}$ fails and to false whenever~$\{v_i,\overline{x_i}\}$ fails.
    Note that not both of the links can fail as otherwise~$v_i$ sends the packet to some node in~$K$.
    Now, the only possibility for the routing to continue indefinitely is if the packet is sent through a cycle of all nodes~$c_j$ and~$d_j$.
    Between these two nodes, it has to be sent through a node~$x_i$ or~$\overline{x_i}$.
    Once this nodes receives the packet from~$c_j$, it sends it to~$v_i$ if the link~$\{x_i,v_i\}$ does not fail.
    The node~$v_i$ then sends it to some node in~$K$.
    Hence, the link~$\{x_i,v_i\}$ fails and this corresponds to the case where~$C_j$ is satisfied by~$x_i$ in the constructed assignment.
    This assignment satisfies all clauses, the input formula~$\phi$ is satisfiable and that concludes the proof.
\end{proof}

We mention that nodes of high degree are necessary in the case of ideal resilience as any constant-degree node means that the connectivity of the input graph and therefore also~$k$ is constant which results in a polynomial-time solvable problem.
We leave it as an open problem whether the reduction for perfect resilience can be modified to have bounded degree.

Note that the number of nodes in the networks constructed in the previous proofs are linear in the number of variables and clauses of the input formulas.
Assuming a complexity hypothesis called the exponential time hypothesis (ETH), \textsc{3-Sat} cannot be solved in~$2^{o(n+m)}$ time~$\cite{IPZ01}$, where~$n$ and~$m$ are the number of variables and clauses in the input formula.
Hence, both verification problems cannot be solved in~$2^{o(n)}$ time.
This contrasts the typical results
that many problems on planar graphs can be solved in subexponential time (usually~$2^{\sqrt{n}}$~time)~\cite{DH08,KM14,Ned20}.
However, as we just proved, this is surprisingly not the case for the verification problem for ideal or perfect resilience (assuming the ETH).

\section{Conclusion and Open Questions}\label{sec:conclusion}

Fast re-route mechanisms that allow for a local and immediate reaction to link failures are essential in present-day dependable computer networks. Several approaches for
fast re-routing have been implemented and deployed in essentially every type of computer network. However, the
question for which graphs we can efficiently construct perfectly resilient protection mechanisms, meaning that
they should guarantee packet delivery in any failure scenario as long as the source and target nodes remain 
physically connected, has not been solved yet. 
The exact computational complexity 
of deciding
whether a given network topology together with a target node allows for a perfectly resilient routing protection
is showed to be polynomial by the application of Robertson and Seymour theorem~\cite{RS-theorem} together
with the minor-stability result~\cite{FHPST22} for the existence of a perfectly resilient routing to
all possible target nodes as well as to
a given target node (via the rooted variant of this
result~\cite{RS90,RS95}).
This contrasts our result that the verification problem
is computationally hard---we showed by a nontrivial reduction from \textsc{3-Sat} that the question whether a given set of routing tables for a fixed target node
is perfectly resilient is \coNP-complete.

We have also studied a variant of a local fast re-route that is in-port oblivious (i.e., the packet's incoming
interface is not part of the routing tables). This potentially enables a faster failover protection as well as a
more memory-efficient way of storing the forwarding tables. In this restricted case, we were able to
provide a linear-time algorithm that determines whether a given in-port oblivious routing is 
perfectly resilient and we showed that the synthesis problem is also decidable in linear time, allowing
us to efficiently construct in-port oblivious routing tables 
for the topologies where this is indeed possible.

A major open problem is to design a concrete algorithm that 
for a given network constructs in polynomial time perfectly resilient routing tables whenever this is possible.

\section*{Acknowledgements}
This work was partially supported by the European Research Council~(ERC) under the European Union’s Horizon 2020 research and innovation program (grant agreement No. 819416) and ERC Consolidator grant AdjustNet (agreement No. 864228).

\bibliographystyle{plainnat}
\bibliography{references}

\begin{thebibliography}{31}
\providecommand{\natexlab}[1]{#1}
\providecommand{\url}[1]{\texttt{#1}}
\expandafter\ifx\csname urlstyle\endcsname\relax
  \providecommand{\doi}[1]{doi: #1}\else
  \providecommand{\doi}{doi: \begingroup \urlstyle{rm}\Url}\fi

\bibitem[Bankhamer et~al.(2021)Bankhamer, Els{\"{a}}sser, and Schmid]{disc21}
Gregor Bankhamer, Robert Els{\"{a}}sser, and Stefan Schmid.
\newblock Randomized local fast rerouting for datacenter networks with almost optimal congestion.
\newblock In \emph{Proceedings of the 35th International Symposium on Distributed Computing ({DISC})}, pages 9:1--9:19. Schloss Dagstuhl - Leibniz-Zentrum f{\"{u}}r Informatik, 2021.
\newblock \doi{10.4230/LIPICS.DISC.2021.9}.

\bibitem[Bashandy et~al.(2018)Bashandy, Filsfils, Decraene, Litkowski, Francois, Voyer, Clad, and Camarillo]{bashandy2018topology}
Ahmed Bashandy, Clarence Filsfils, B~Decraene, S~Litkowski, P~Francois, D~Voyer, F~Clad, and P~Camarillo.
\newblock Topology independent fast reroute using segment routing.
\newblock \emph{Working Draft}, 2018.

\bibitem[Chiesa et~al.(2016{\natexlab{a}})Chiesa, Gurtov, Madry, Mitrovic, Nikolaevskiy, Schapira, and Shenker]{icalp16}
Marco Chiesa, Andrei~V. Gurtov, Aleksander Madry, Slobodan Mitrovic, Ilya Nikolaevskiy, Michael Schapira, and Scott Shenker.
\newblock On the resiliency of randomized routing against multiple edge failures.
\newblock In \emph{{ICALP}}, pages 134:1--134:15, 2016{\natexlab{a}}.

\bibitem[Chiesa et~al.(2016{\natexlab{b}})Chiesa, Nikolaevskiy, Mitrovic, Panda, Gurtov, Madry, Schapira, and Shenker]{frr-infocom16}
Marco Chiesa, Ilya Nikolaevskiy, Slobodan Mitrovic, Aurojit Panda, Andrei~V. Gurtov, Aleksander Madry, Michael Schapira, and Scott Shenker.
\newblock The quest for resilient (static) forwarding tables.
\newblock In \emph{Proceedings of the 35th Annual {IEEE} International Conference on Computer Communications ({INFOCOM})}, pages 1--9. {IEEE}, 2016{\natexlab{b}}.
\newblock \doi{10.1109/INFOCOM.2016.7524552}.

\bibitem[Chiesa et~al.(2016{\natexlab{c}})Chiesa, Nikolaevskiy, Panda, Gurtov, Schapira, and Shenker]{Chiesa2014}
Marco Chiesa, Ilya Nikolaevskiy, Aurojit Panda, Andrei~V. Gurtov, Michael Schapira, and Scott Shenker.
\newblock Exploring the limits of static failover routing (v4).
\newblock abs/1409.0034.v4, 2016{\natexlab{c}}.

\bibitem[Chiesa et~al.(2017)Chiesa, Nikolaevskiy, Mitrovic, Gurtov, Madry, Schapira, and Shenker]{frr-ton}
Marco Chiesa, Ilya Nikolaevskiy, Slobodan Mitrovic, Andrei~V. Gurtov, Aleksander Madry, Michael Schapira, and Scott Shenker.
\newblock On the resiliency of static forwarding tables.
\newblock \emph{{IEEE/ACM} Transactions on Networking}, 25\penalty0 (2):\penalty0 1133--1146, 2017.
\newblock \doi{10.1109/TNET.2016.2619398}.

\bibitem[Chiesa et~al.(2019)Chiesa, Sedar, Antichi, Borokhovich, Kamisinski, Nikolaidis, and Schmid]{conext19failover}
Marco Chiesa, Roshan Sedar, Gianni Antichi, Michael Borokhovich, Andrzej Kamisinski, Georgios Nikolaidis, and Stefan Schmid.
\newblock {PURR}: {A} primitive for reconfigurable fast reroute: {H}ope for the best and program for the worst.
\newblock In \emph{Proceedings of the 15th International Conference on emerging Networking EXperiments and Technologies ({CoNEXT})}, pages 1--14. {ACM}, 2019.
\newblock \doi{10.1145/3359989.3365410}.

\bibitem[Chiesa et~al.(2021)Chiesa, Kamisinski, Rak, R{\'{e}}tv{\'{a}}ri, and Schmid]{frr-survey}
Marco Chiesa, Andrzej Kamisinski, Jacek Rak, G{\'{a}}bor R{\'{e}}tv{\'{a}}ri, and Stefan Schmid.
\newblock A survey of fast-recovery mechanisms in packet-switched networks.
\newblock \emph{{IEEE} Communications Surveys and Tutorials}, 23\penalty0 (2):\penalty0 1253--1301, 2021.
\newblock \doi{10.1109/COMST.2021.3063980}.

\bibitem[Dai et~al.(2023)Dai, Foerster, and Schmid]{dai2023tight}
Wenkai Dai, Klaus{-}Tycho Foerster, and Stefan Schmid.
\newblock A tight characterization of fast failover routing: {R}esiliency to two link failures is possible.
\newblock In \emph{Proceedings of the 35th {ACM} Symposium on Parallelism in Algorithms and Architectures ({SPAA})}, pages 153--163. {ACM}, 2023.
\newblock \doi{10.1145/3558481.3591080}.

\bibitem[Demaine and Hajiaghayi(2008)]{DH08}
Erik~D. Demaine and MohammadTaghi Hajiaghayi.
\newblock The bidimensionality theory and its algorithmic applications.
\newblock \emph{The Computer Journal}, 51\penalty0 (3):\penalty0 292--302, 2008.

\bibitem[Edmonds(1973)]{1973Edmonds}
Jack~R. Edmonds.
\newblock Edge-disjoint branchings.
\newblock \emph{Combinatorial Algorithms}, pages 91--96, 1973.

\bibitem[Feigenbaum et~al.(2012)Feigenbaum, Godfrey, Panda, Schapira, Shenker, and Singla]{feigenbaum2012brief}
Joan Feigenbaum, Brighten Godfrey, Aurojit Panda, Michael Schapira, Scott Shenker, and Ankit Singla.
\newblock Brief announcement: {O}n the resilience of routing tables.
\newblock In \emph{{PODC}}, pages 237--238. {ACM}, 2012.

\bibitem[Foerster et~al.(2018)Foerster, Parham, Chiesa, and Schmid]{foerster2018ti}
Klaus{-}Tycho Foerster, Mahmoud Parham, Marco Chiesa, and Stefan Schmid.
\newblock {TI-MFA}: {K}eep calm and reroute segments fast.
\newblock In \emph{Proceedings of the 37th {IEEE} Conference on Computer Communications Workshops ({INFOCOM})}, pages 415--420. {IEEE}, 2018.
\newblock \doi{10.1109/INFCOMW.2018.8406885}.

\bibitem[Foerster et~al.(2021{\natexlab{a}})Foerster, Hirvonen, Pignolet, Schmid, and Tr{\'{e}}dan]{FHPST21}
Klaus{-}Tycho Foerster, Juho Hirvonen, Yvonne{-}Anne Pignolet, Stefan Schmid, and Gilles Tr{\'{e}}dan.
\newblock On the feasibility of perfect resilience with local fast failover.
\newblock In \emph{Proceedings of the 2nd {SIAM} Symposium on Algorithmic Principles of Computer Systems ({APOCS})}, pages 55--69. {SIAM}, 2021{\natexlab{a}}.
\newblock \doi{10.1137/1.9781611976489.5}.

\bibitem[Foerster et~al.(2021{\natexlab{b}})Foerster, Kamisinski, Pignolet, Schmid, and Tr{\'{e}}dan]{INFOCOM21-FFR}
Klaus-Tycho Foerster, Andrzej Kamisinski, Yvonne-Anne Pignolet, Stefan Schmid, and Gilles Tr{\'{e}}dan.
\newblock Grafting arborescences for extra resilience of fast rerouting schemes.
\newblock In \emph{{INFOCOM}}, pages 1--10. {IEEE}, 2021{\natexlab{b}}.

\bibitem[Foerster et~al.(2022)Foerster, Hirvonen, Pignolet, Schmid, and Tr{\'{e}}dan]{FHPST22}
Klaus{-}Tycho Foerster, Juho Hirvonen, Yvonne{-}Anne Pignolet, Stefan Schmid, and Gilles Tr{\'{e}}dan.
\newblock On the price of locality in static fast rerouting.
\newblock In \emph{Proceedings of the 52nd Annual {IEEE/IFIP} International Conference on Dependable Systems and Networks ({DSN})}, pages 215--226. {IEEE}, 2022.
\newblock \doi{10.1109/DSN53405.2022.00032}.

\bibitem[Gy{\"{o}}rgyi et~al.(2024)Gy{\"{o}}rgyi, Larsen, Schmid, and Srba]{frr-syrep24}
Csaba Gy{\"{o}}rgyi, Kim~G. Larsen, Stefan Schmid, and Jir{\'{\i}} Srba.
\newblock Syrep: {E}fficient synthesis and repair of fast re-route forwarding tables for resilient networks.
\newblock In \emph{Proceedings of the 54th Annual {IEEE/IFIP} International Conference on Dependable Systems and Networks ({DSN})}, pages 483--494. {IEEE}, 2024.
\newblock \doi{10.1109/DSN58291.2024.00053}.

\bibitem[Györgyi et~al.(2024)Györgyi, Larsen, Schmid, and Srba]{gyorgyi2024syper}
Csaba Györgyi, Kim~G. Larsen, Stefan Schmid, and Jiří Srba.
\newblock Syper: Synthesis of perfectly resilient local fast re-routing rules for highly dependable networks.
\newblock In \emph{IEEE INFOCOM 2024 - IEEE Conference on Computer Communications}, pages 2398--2407, 2024.
\newblock \doi{10.1109/INFOCOM52122.2024.10621323}.

\bibitem[Holterbach et~al.(2017)Holterbach, Vissicchio, Dainotti, and Vanbever]{holterbach2017swift}
Thomas Holterbach, Stefano Vissicchio, Alberto Dainotti, and Laurent Vanbever.
\newblock {SWIFT}: {P}redictive fast reroute.
\newblock In \emph{Proceedings of the 31st Conference of the {ACM} Special Interest Group on Data Communication ({SIGCOMM})}, pages 460--473. {ACM}, 2017.
\newblock \doi{10.1145/3098822.3098856}.

\bibitem[Impagliazzo et~al.(2001)Impagliazzo, Paturi, and Zane]{IPZ01}
Russell Impagliazzo, Ramamohan Paturi, and Francis Zane.
\newblock Which problems have strongly exponential complexity?
\newblock \emph{Journal of Computer and System Sciences}, 63\penalty0 (4):\penalty0 512--530, 2001.

\bibitem[Jensen et~al.(2020)Jensen, Kristiansen, Schmid, Schou, Schrenk, and Srba]{jensen2020aalwines}
Peter~Gj{\o}l Jensen, Dan Kristiansen, Stefan Schmid, Morten~Konggaard Schou, Bernhard~Clemens Schrenk, and Jir{\'{\i}} Srba.
\newblock {AalWiNes}: {A} fast and quantitative what-if analysis tool for {MPLS} networks.
\newblock In \emph{Proceedings of the 16th International Conference on emerging Networking EXperiments and Technologies ({CoNEXT})}, pages 474--481. {ACM}, 2020.
\newblock \doi{10.1145/3386367.3431308}.

\bibitem[Karp(1972)]{Karp72}
Richard~M. Karp.
\newblock Reducibility among combinatorial problems.
\newblock In \emph{Complexity of computer computations}, pages 85--103. Plenum Press, 1972.

\bibitem[Klein and Marx(2014)]{KM14}
Philip~N. Klein and D{\'{a}}niel Marx.
\newblock A subexponential parameterized algorithm for subset {TSP} on planar graphs.
\newblock In \emph{Proceedings of the 25th Annual {ACM-SIAM} Symposium on Discrete Algorithms ({SODA})}, pages 1812--1830. {SIAM}, 2014.

\bibitem[Liu et~al.(2013)Liu, Panda, Singla, Godfrey, Schapira, and Shenker]{frr-ankit}
Junda Liu, Aurojit Panda, Ankit Singla, Brighten Godfrey, Michael Schapira, and Scott Shenker.
\newblock Ensuring connectivity via data plane mechanisms.
\newblock In \emph{Proceedings of the 10th {USENIX} Symposium on Networked Systems Design and Implementation ({NSDI})}, pages 113--126. {USENIX} Association, 2013.

\bibitem[Nederlof(2020)]{Ned20}
Jesper Nederlof.
\newblock Detecting and counting small patterns in planar graphs in subexponential parameterized time.
\newblock In \emph{Proceedings of the 52nd Annual {ACM} {SIGACT} Symposium on Theory of Computing ({STOC})}, pages 1293--1306. {ACM}, 2020.

\bibitem[Pan et~al.(2005)Pan, Swallow, and Atlas]{mplsfrr}
Ping Pan, George Swallow, and Alia Atlas.
\newblock Fast reroute extensions to {RSVP-TE} for {LSP} tunnels.
\newblock \emph{{RFC}}, 4090:\penalty0 1--38, 2005.

\bibitem[Papán et~al.(2018)Papán, Segeč, Moravčík, Kontšek, Mikuš, and Uramová]{ipfrr-survey-1-long}
Jozef Papán, Pavel Segeč, Marek Moravčík, Martin Kontšek, L'udovít Mikuš, and Jana Uramová.
\newblock Overview of {IP} fast reroute solutions.
\newblock In \emph{Proceedings of the International Conference on Emerging eLearning Technologies and Applications (ICETA)}, pages 417--424, 2018.

\bibitem[Robertson and Seymour(1990{\natexlab{a}})]{RS90}
Neil Robertson and Paul~D. Seymour.
\newblock Graph minors {IV}. {T}ree-width and well-quasi-ordering.
\newblock \emph{Journal of Combinatorial Theory, Series {B}}, 48\penalty0 (2):\penalty0 227--254, 1990{\natexlab{a}}.
\newblock \doi{10.1016/0095-8956(90)90120-O}.

\bibitem[Robertson and Seymour(1995)]{RS95}
Neil Robertson and Paul~D. Seymour.
\newblock Graph minors {XIII}. {T}he disjoint paths problem.
\newblock \emph{Journal of Combinatorial Theory, Series {B}}, 63\penalty0 (1):\penalty0 65--110, 1995.
\newblock \doi{10.1006/JCTB.1995.1006}.

\bibitem[Robertson and Seymour(1990{\natexlab{b}})]{RS-theorem}
Neil Robertson and P.D Seymour.
\newblock Graph minors. {VIII}. {A Kuratowski} theorem for general surfaces.
\newblock \emph{Journal of Combinatorial Theory, Series B}, \penalty0 (48(2)):\penalty0 255--288, 1990{\natexlab{b}}.

\bibitem[{{Switch Specification 1.3.1}}(2013)]{offrr}
{{Switch Specification 1.3.1}}.
\newblock {OpenFlow}.
\newblock In \emph{\url{https://bit.ly/2VjOO77}}, 2013.

\end{thebibliography}

\end{document}